\documentclass[11pt]{article}
\usepackage[margin=1in]{geometry}
\usepackage{amsmath,amssymb,amsthm,mathtools}
\usepackage{booktabs}
\usepackage[colorlinks=true,linkcolor=blue,citecolor=blue]{hyperref}

\newtheorem{theorem}{Theorem}[section]
\newtheorem{proposition}[theorem]{Proposition}
\newtheorem{lemma}[theorem]{Lemma}
\newtheorem{corollary}[theorem]{Corollary}
\theoremstyle{definition}
\newtheorem{definition}[theorem]{Definition}
\newtheorem{example}[theorem]{Example}
\newtheorem{remark}[theorem]{Remark}

\newcommand{\R}{\mathbb{R}}
\newcommand{\C}{\mathbb{C}}
\newcommand{\N}{[n]}
\newcommand{\spn}{\mathrm{span}}
\newcommand{\Inv}{\mathrm{Inv}}

\title{What Multichoice Values Cannot See:\\ The Information Content of Anonymous Values\\ for Games with Graded Participation}
\author{Matthew Fried\thanks{Touro University.}}

\begin{document}
\maketitle

\begin{abstract}
In a cooperative game with graded participation, each of $n$ players acts at one of $m$ ordered levels; multichoice games, voting with abstention, and graded feature attribution all take this form. For this setting a substantial zoo of competing values, power indices, and importance measures has been proposed. We determine exactly what the entire family of linear, player-symmetric solutions, the class containing the standard proposals, can and cannot see, for all $m$ at once. The mechanism is short: the functional computing
any one player's payoff under an anonymous value is invariant under permutations of
the other $n-1$ players, and by the branching rule such invariants exist only in the
Specht constituents $(n)$ and $(n-1,1)$ of $(\C^m)^{\otimes n}$. Consequences: the
joint information of all anonymous values is a component of dimension polynomial in
$n$ against an $m^n$-dimensional game space, with the classical binary theory as the
$m=2$ shadow. \emph{Three players can hide}: for $m\ge3$ the blind space is nonzero already at $n=3$, and the first invisible constituent, arising at the minimal player count, is not a correlation but a chirality, realized by two distinct monotone abstention-voting rules on three voters that every anonymous power index scores identically. For every $n\ge4$ the blind space is spanned explicitly by $\pm1$ games supported on four profiles each. Order-$d$ interaction probes see exactly the partitions with at most $d$ cells outside the first row, with full recovery only at $d=n-\lceil n/m\rceil$. Audit evasion gets easier than in the binary theory: a coalition of $c$ players evades every order-$d$ audit iff $c-\lceil c/m\rceil\ge d+1$, so with three or more levels, trios can restructure invisibly to every value-based payment scheme. Algorithmically, the visible part of a game is a polynomial-size sketch, estimable from polynomially many queries, while exact computation of any full-support anonymous value requires querying all $m^n-1$ nonzero profiles. All dimension and rank claims are verified computationally.
\end{abstract}

\section{Introduction}

\subsection{The question}

The theory of values for games with graded participation is in an expansionary phase.
In a \emph{multichoice game}, each of $n$ players participates at one of $m$ ordered levels (out, part-time, full-time; off, attenuated, on), and a characteristic function records the worth of every configuration. Extensions of the Shapley value to
this setting have multiplied \cite{HR,vdN,KSZ,CS00,PZ05,GL07,Lowing} to the point that recent work
is devoted to comparing them and seeking consensus among them \cite{LT25}; the same
solutions are now deployed as feature- and parameter-importance measures in machine learning, where Shapley-style attribution is already the default instrument
\cite{LL17} and a component is not merely present or absent \cite{FLSN,GLR}; and a
parallel family of power indices serves voting with abstention and multiple levels of
approval \cite{FM,FZ,Freixas05,Tchantcho}. Each instrument answers the question
``how should credit be assigned?'' differently, and the literature compares them
axiom by axiom.

This paper asks a question that is prior to the axiomatics: whatever the assignment, what can the instruments \emph{see}? Which two games are distinguished by \emph{no} anonymous value, index, or
importance measure whatsoever, neither the published ones, nor future ones, nor all of them pooled together? Call the span of everything the family can measure the \emph{visible space}, and its orthogonal complement the \emph{blind space}: two games differing by a blind game receive identical treatment from every anonymous instrument at once. In computational terms, a game is an exponentially large function on $[m]^n$; an anonymous measurement may single out one coordinate and its symbol, but sees the remaining coordinates only through their histogram; and the question is the joint information content of all such linear measurements.

For binary participation ($m=2$) the answer is classical. Kleinberg and Weiss
decomposed binary game space under the symmetric group \cite{KW85a,KW85b,KW86};
Hern\'andez-Lamoneda, Ju\'arez and S\'anchez-S\'anchez showed that every symmetric
linear value factors through the two constituents isomorphic to summands of $\R^n$
\cite{HLJS}; and Amer, Derks and Gim\'enez constructed the common kernel of the
semivalue family, proving it nonzero exactly from four players and spanning it by
explicit $\{-1,0,1\}$-valued games \cite{ADG} (see \cite{CO} for a survey). For $m\ge3$ the closest neighbor is the representation-theoretic study by Saito and Kusunoki \cite{SK22} of \emph{bicooperative} games \cite{Bilbao00}, a ternary class. In the spirit of \cite{HLJS}, they decompose that game space slice by slice, parametrize all linear symmetric solutions, and prove that the unanalyzed remainder of their decomposition is annihilated by every such solution. For $m=3$, order one, and payoffs in $\R^n$, the joint-information question is thereby answered implicitly, and even the nonvanishing of the blind space at three players can be extracted from their theorem by a dimension count, although it is not remarked there. What their analysis deliberately sets aside (their Remark~3.2 declares the remainder to play no role) is the blind space itself: its content, its size, its witnesses, its behavior across $m$, and its response to stronger probes. Those questions are this paper's subject, for every $m$ at once.

We compute the answer for every $m$ simultaneously, and it is not a routine
transcription of the binary case. The threshold for invisibility drops from four players to three; the first invisible constituent changes character entirely, from a correlation pattern to a chirality; the ladder of interaction probes needed for full recovery grows taller; and audit evasion gets strictly easier. Table~\ref{tab:compare} is the summary; provenance for each entry is stated precisely in \S\ref{subsec:technique}.

\paragraph{Significance.}
Three points position the paper. First, the main theorem is an impossibility
result for a class of algorithms rather than a comparison of solutions within
it. Anonymity is a structural constraint on estimators, and
Theorem~\ref{thm:main} computes the exact information ceiling that this
constraint imposes, for every level count at once; the bound therefore applies
to every instrument in the cited literatures and to every instrument not yet
invented. Second, the ceiling has a query-complexity reading, made exact in Section~\ref{sec:algo}. The visible space is a linear sketch of dimension $O(n^m)$ of an $m^n$-dimensional object, of minimal rank among sketches determining every anonymous value; everything an anonymous auditor can see is estimable from polynomially many queries, while exact computation of any full-support anonymous value requires querying all $m^n-1$ nonzero profiles. The blind space supplies the matching oracle separation: a single profile query distinguishes games that all anonymous instruments jointly cannot, at any sample size. Third, the proof explains constants instead of merely computing them. The binary analysis yields three benchmark thresholds: the classical four-player invisibility threshold \cite{ADG} and, from the higher-order binary study \cite{Fried26}, the $\lfloor n/2\rfloor$ recovery cap and the $c\ge2d+2$ evasion bound. A single
application of the branching rule shows that all three are two-row facts, that
the number of rows available equals the number of participation levels, and
hence exactly how each constant moves when $m$ grows. The consequences for the
motivating literatures follow at once: graded feature attribution has a
strictly larger blind spot than binary attribution at the same probe order, and
three colluding agents can restructure invisibly to every value-based payment scheme as soon as participation has three levels.

\begin{table}[h]
\centering
\begin{tabular}{@{}lll@{}}
\toprule
 & binary ($m=2$) & graded ($m\ge3$)\\
\midrule
invisibility threshold & $n\ge4$ \cite{ADG} & $n\ge3$ (Thm~\ref{thm:three})\\
first invisible constituent & correlation & chirality (Ex.~\ref{ex:chirality})\\
spanning the blind space & $\pm1$ on four coalitions \cite{ADG} & $\pm1$ on four profiles (Thm~\ref{thm:span})\\
recovery cap, order-$d$ probes & $\lfloor n/2\rfloor$ \cite{Fried26} & $n-\lceil n/m\rceil$ (Thm~\ref{thm:ladder})\\
invisible evasion vs.\ order-$d$ audit & $c\ge2d+2$ \cite{Fried26} & $c-\lceil c/m\rceil\ge d+1$ (Cor~\ref{cor:manip})\\
visible dimension & $n^2-n+1$ & degree-$m$ polynomial in $n$ (Thm~\ref{thm:dims})\\
\bottomrule
\end{tabular}
\caption{The binary theory versus the graded theory, at a glance. Nonvanishing of the $m=3$ blind space at $n=3$ (first row) is extractable by dimension count from \cite{SK22}, as discussed in the introduction; the identification of its content, and the remaining graded entries, are new here.}
\label{tab:compare}
\end{table}

\subsection{Results}

Throughout, an \emph{anonymous value} is any linear solution concept that treats
players symmetrically: payoffs may depend on a player's own level and on the census of
everyone else's levels, but never on anyone's name (Definition~\ref{def:anonymous}).
Lemma~\ref{lem:anonymous} verifies that this class contains the standard multichoice
values, indices, and importance measures, including those that assign payoffs to player--level pairs, while Section~\ref{sec:setup} delimits what falls outside (player-weighted, heterogeneous-level, and nonlinear concepts).

\begin{enumerate}
\item \textbf{Characterization} (Theorem~\ref{thm:main}, \S\ref{sec:main}). The
visible space is exactly the $(n)\oplus(n{-}1,1)$ isotypic component of
$(\C^m)^{\otimes n}$ under the symmetric group permuting players. One branching-rule argument proves this for every $m$ at once; the classical binary theory drops out as the case $m=2$, and the ternary order-one statement implicit in \cite{SK22} as the case $m=3$.
\item \textbf{Dimensions} (Theorem~\ref{thm:dims}, \S\ref{sec:dims}). The visible
dimension is $\binom{n+m-1}{m-1}+(n-1)^2\binom{n+m-2}{n}$, a polynomial in $n$ of degree $m$, against a game space of dimension $m^n$. At fixed $n\ge3$ the blind \emph{fraction} tends to $1-\bigl(1+(n-1)^2\bigr)/n!$ as $m$ grows: richer participation reduces, not raises, the completeness of valuation.
\item \textbf{Three can hide, and what hides} (Theorems~\ref{thm:three}
and~\ref{thm:span}, \S\ref{sec:three}). The binary blind space is zero until four
players; for $m\ge3$ it is nonzero already at three, spanned by determinant games, which are chiralities (Example~\ref{ex:chirality}). And for every $n\ge4$ the entire blind space
is spanned by \emph{double-swap games}: $\pm1$ games supported on four profiles
related by exchanging two players' levels in one place and two other players' levels
in another. At $m=2$ the double-swap games span the classical semivalue kernel,
yielding a family of the same four-coalition kind as \cite{ADG}.
\item \textbf{The ladder} (Theorem~\ref{thm:ladder}, \S\ref{sec:ladder}). Order-$d$
interaction probes see exactly the Specht constituents with at most $d$ cells outside
the first row; full recovery arrives only at $d=n-\lceil n/m\rceil$, never earlier than the binary $\lfloor n/2\rfloor$ and strictly later for every $n\ge3$ except $(n,m)=(4,3)$, where the caps coincide.
\item \textbf{Audit evasion} (Corollary~\ref{cor:manip}, \S\ref{sec:ladder}). A coalition of $c$ players can restructure how value depends on its own levels, invisibly to every order-$\le d$ payment scheme, iff $c-\lceil c/m\rceil\ge d+1$. With three or more levels, trios evade every value-based scheme.
\end{enumerate}

Section~\ref{sec:applications} converts these into statements for the three motivating
literatures, including two distinct monotone $(3,2)$ voting rules on three voters that every anonymous index scores identically (Proposition~\ref{prop:voting}), and
Section~\ref{sec:algo} gives the query-complexity reading: the visible part of a game is polynomial-size, cheaply estimable, and of minimal sketch rank, while the game itself is not.

\subsection{An invisible chirality, concretely}

The following example contains, in miniature, most of what is new in this paper.

\begin{example}[three players hide a chirality]\label{ex:chirality}
Three consultants $A,B,C$ each work at level $0$ (out), $1$ (part-time), or $2$
(full-time). Writing a profile as $(x_A,x_B,x_C)$, define a game $\varepsilon$ on the
$27$ configurations: it is $0$ except
on the six profiles where all three levels differ, where
\[
\varepsilon(0,1,2)=\varepsilon(1,2,0)=\varepsilon(2,0,1)=+1,\qquad
\varepsilon(0,2,1)=\varepsilon(2,1,0)=\varepsilon(1,0,2)=-1.
\]
The game rewards one \emph{orientation} of the division of labor, the rotation in which $A$ hands off to $B$ hands off to $C$, and penalizes its mirror image. Now
aggregate the way any anonymous instrument must: fix one player's level and the census
of the others. Every such statistic pairs each profile with its swapped partner and
cancels; for instance, ``$A$ out, the others at part- and full-time'' reads
\[
\varepsilon(0,1,2)+\varepsilon(0,2,1)=1-1=0,
\]
and the same happens for every player and every census. So every anonymous value,
power index, or importance measure, every one in the literature and every one yet to be proposed, treats $\varepsilon$ exactly as it treats the zero game, even
though the two orientations it separates can be economically opposite. With binary
participation nothing of this kind exists below four players
(Theorem~\ref{thm:three}); and up to scale, this six-profile game is
\emph{everything} three players with three levels can hide.
\end{example}

\subsection{Technique, related work, and what is not claimed}\label{subsec:technique}

The entire mechanism is one observation plus one classical fact. Anonymity means the
functional computing player $i$'s payoff is invariant under the symmetric group on the
other $n-1$ players; the branching rule says which irreducible constituents of game
space admit such invariants: only $(n)$ and $(n-1,1)$. Everything else in the paper (dimension counts, thresholds, spanning families, the ladder, the evasion bounds) is the bookkeeping this one observation makes possible. We regard the brevity as the point: one mechanism accounts for every constant in Table~\ref{tab:compare}, and it bounds instruments that have not been invented yet.

To be precise about foundation versus contribution. \emph{Known}: the binary theory recalled in Section~\ref{sec:binary-recap} is due to \cite{KW85a,KW85b,KW86,HLJS,ADG}; for $m=3$ with payoffs in $\R^n$, the order-one characterization is implicit in \cite{SK22} as described above, obtained there by codomain-side Schur arguments in the style of \cite{HLJS}; and the binary order-$d$ cap $\lfloor n/2\rfloor$ and evasion threshold $c\ge2d+2$ come from the binary study \cite{Fried26}, of which the present paper is independent. \emph{New here}: the uniform characterization for every $m$ through domain-side invariants (the readings are $S_{n-1}$-invariant vectors, and the branching rule does the rest), which is what extends to order-$d$ probes where codomain arguments do not; the identification of the blind content, beginning with the chirality $\Lambda^3\C^m$ and the count $\binom m3$; the dimension formulas; the double-swap spanning theorem; the full ladder and its cap; the coalition-local evasion threshold; the monotone $(3,2)$ collision; and coverage of solutions valued in player--level pairs, outside the $\R^n$ codomain of \cite{SK22}. All ranks and dimensions are verified by explicit computation (Appendix~\ref{app:verification}).

\section{The binary foundations}\label{sec:binary-recap}

\emph{Everything in this section is known. We restate the binary theory in the
notation the rest of the paper will reuse, following \cite{KW85a,ADG,HLJS,CO}; a
reader who knows these results can move directly to Section~\ref{sec:setup}.}

Binary game space
$\{v:2^{\N}\to\R,\,v(\varnothing)=0\}$ decomposes by cardinality slices, each slice
$\mathcal G_k\cong M^{(n-k,k)}$ splitting into Specht modules
$S^{(n)}\oplus S^{(n-1,1)}\oplus\cdots\oplus S^{(n-\min(k,n-k),\min(k,n-k))}$. By
Schur's lemma, every symmetric linear value, semivalues \cite{DNW} included, factors through the constituents isomorphic to summands of $\R^n$, i.e.\ $(n)$ and $(n-1,1)$;
the common kernel of the semivalue family is therefore the sum of all constituents
$S^{(n-j,j)}$ with $j\ge2$, is nonzero precisely from $n=4$, and admits an explicit
spanning family of $\{-1,0,1\}$-valued games \cite{ADG}. The visible dimension is
$n+(n-1)^2=n^2-n+1$.

\section{Multichoice games and anonymous values}\label{sec:setup}

\emph{Before proving anything we must fix what ``every anonymous value'' means, and
the definition has to carry real weight: the main theorem bounds a whole class, so the
class had better contain the values people actually use. This section sets up the
model, introduces the reading $F_{i,\ell,\tau}$, the census statistic from which every anonymous payoff is built, and proves a coverage lemma pinning down exactly
which values the bound applies to.}

\begin{definition}
Fix $n\ge2$ players and $m\ge2$ \emph{participation levels} $\{0,1,\dots,m-1\}$, level
$0$ meaning non-participation. A \emph{multichoice game} is a function
$v:\{0,\dots,m-1\}^n\to\R$ with $v(\mathbf 0)=0$. Game space $G_{n,m}$ has dimension
$m^n-1$; we work in the full $m^n$-dimensional space $\C[\{0,\dots,m-1\}^n]\cong
(\C^m)^{\otimes n}$ and account for the anchor at the end; unless stated otherwise,
dimension statements refer to the full space.\footnote{We work over $\C$
for convenience only; every module in this paper is defined over $\mathbb{Q}$ and nothing
depends on the field.} The symmetric group $S_n$ acts by permuting players (tensor slots); levels are \emph{not} permuted, so they remain distinguished labels. The order on levels plays no role in the representation theory, which therefore applies to any common finite action alphabet; the order enters only through the monotonicity and voting applications.
\end{definition}

\begin{definition}\label{def:anonymous}
For a player $i$, an \emph{aggregated reading} is determined by a pair $(\ell,\tau)$
with $\ell$ a level and $\tau$ a \emph{type} (a multiset recording how many of the
other $n-1$ players sit at each level):
\[
F_{i,\ell,\tau}(v)\;=\;\sum_{\substack{x:\ x_i=\ell,\\ \mathrm{type}(x_{-i})=\tau}} v(x).
\]
Equivalently, writing $r_{i,\ell,\tau}=\sum_{x:\ x_i=\ell,\ \mathrm{type}(x_{-i})=\tau}e_x$ for the orbit-sum vectors and $\langle\cdot,\cdot\rangle$ for the inner product making the profile basis orthonormal, $F_{i,\ell,\tau}(v)=\langle r_{i,\ell,\tau},v\rangle$. An \emph{anonymous value} is any value $\psi$ such that each $\psi_i$ is a linear combination of the readings $\{F_{i,\ell,\tau}\}_{\ell,\tau}$ with coefficients independent of $i$ (equivalently, by Lemma~\ref{lem:anonymous}: $\psi$ is linear and equivariant under relabeling players). The \emph{joint information} of the class is the visible space $\mathcal V_{n,m}=\spn\{r_{i,\ell,\tau}:\ i,\ell,\tau\}\subseteq(\C^m)^{\otimes n}$, and the \emph{blind space} is its orthogonal complement $\mathcal N_{n,m}=\mathcal V_{n,m}^{\perp}$: two games receive identical treatment from every anonymous value iff they differ by an element of $\mathcal N_{n,m}$.
\end{definition}

An anonymous value may thus pay player $i$ for $i$'s own level and for the census of
everyone else, meaning how many others sit at each level, but never for \emph{which} others. Concretely, with $n=4$ and $m=3$, one reading of player $1$ sums $v$ over the three profiles in which player $1$ is at level $2$ while, among the other three players, one sits at level $0$ and two sit at level $1$. The following lemma is the linchpin of every coverage claim in the paper: the
anonymity class is exactly the equivariant linear values, including those that assign
payoffs to player--level pairs, which is how most multichoice values are formulated.

\begin{lemma}[coverage]\label{lem:anonymous}
A linear value $\psi:(\C^m)^{\otimes n}\to\C^n$ is equivariant under relabeling
players iff each $\psi_i$ is a linear combination of the readings
$\{F_{i,\ell,\tau}\}_{\ell,\tau}$ with coefficients independent of $i$. The same holds
componentwise for values $\psi:(\C^m)^{\otimes n}\to\C^{n\times(m-1)}$ assigning
payoffs to player--level pairs, as in \cite{vdN,HR,Lowing}.
\end{lemma}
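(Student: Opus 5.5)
The plan is to prove both directions by passing to the dual picture. A linear functional on $(\C^m)^{\otimes n}$ is the same thing as a vector $w$ with $\psi_i(v)=\langle w_i,v\rangle$. Equivariance of $\psi$ under relabeling players, $\psi_{\sigma(i)}(\sigma v)=\psi_i(v)$ for all $\sigma\in S_n$, then translates into two conditions on the representing vectors: $w_{\sigma(i)}=\sigma w_i$ for every $\sigma$.

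For the direction ``equivariant $\Rightarrow$ combination of readings,'' take $\sigma$ in the stabilizer $\mathrm{Stab}(i)\cong S_{n-1}$. Then $w_i$ is invariant under permutations of the players other than $i$. The key step is that the $S_{n-1}$-invariant vectors are exactly the span of the orbit sums of $S_{n-1}$ acting on profiles. Since this action permutes the profile basis, the invariants are spanned by orbit indicators: this is the standard fact that for a permutation module the invariant space has the orbit sums as a basis. The $S_{n-1}$-orbit of a profile $x$ is determined by $x_i$ together with the type of $x_{-i}$, because two level-sequences on the remaining $n-1$ slots are related by a permutation iff they have the same multiset. Hence the orbit sums are precisely the $r_{i,\ell,\tau}$, and $w_i=\sum_{\ell,\tau}c^{(i)}_{\ell,\tau}r_{i,\ell,\tau}$.

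Independence of the coefficients from $i$ comes next. Apply $w_{\sigma(i)}=\sigma w_i$ and note that $\sigma r_{i,\ell,\tau}=r_{\sigma(i),\ell,\tau}$, since levels are not permuted and the type of the others is preserved. Because the $r_{j,\ell,\tau}$ for fixed $j$ are linearly independent (their supports are disjoint and nonempty), comparing coefficients gives $c^{(\sigma(i))}=c^{(i)}$. Transitivity of $S_n$ on players then finishes this direction.

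For the converse, start from $\psi_i=\sum c_{\ell,\tau}F_{i,\ell,\tau}$ and compute directly: $F_{\sigma(i),\ell,\tau}(\sigma v)=\langle r_{\sigma(i),\ell,\tau},\sigma v\rangle=\langle\sigma^{-1}r_{\sigma(i),\ell,\tau},v\rangle=F_{i,\ell,\tau}(v)$. This uses that $\sigma$ acts by a permutation matrix, hence unitarily. It gives equivariance.

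For the player--level version, the codomain $\C^{n\times(m-1)}$ has $S_n$ acting on the player index only. Run the same argument separately for each fixed level column $k$: the functional $\psi_{i,k}$ is $\mathrm{Stab}(i)$-invariant, and it transforms to $\psi_{\sigma(i),k}$. This yields coefficients $c^{(k)}_{\ell,\tau}$ depending on $k$ but not on $i$.

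The only step needing care is pinning down the action convention. One must fix $(\sigma v)(x)=v(x\circ\sigma)$, or the equivalent convention, and check that it is consistent with $\sigma e_x=e_{\sigma\cdot x}$ and with $\sigma r_{i,\ell,\tau}=r_{\sigma(i),\ell,\tau}$. Once that is settled, everything else is orbit bookkeeping, and there is no genuine obstacle.
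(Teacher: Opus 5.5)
Your proof is correct and follows the paper's own argument: stabilizer invariance of $\psi_i$, the orbit sums as a basis of the $S_{n-1}^{(i)}$-invariants, the identification of orbits with fibers of $x\mapsto(x_i,\mathrm{type}(x_{-i}))$, transitivity of $S_n$ to make the coefficients independent of $i$, a direct check for the converse, and a componentwise extension to player--level payoffs. The only difference is that you make explicit two things the paper leaves implicit: the coefficient comparison via $\sigma r_{i,\ell,\tau}=r_{\sigma(i),\ell,\tau}$ together with disjointness of supports, and the choice of action convention.
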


\begin{proof}
If $\sigma\in S_n$ fixes $i$, equivariance gives $\psi_i(\sigma v)=\psi_i(v)$: the
functional $\psi_i$ is invariant under the subgroup $S_{n-1}^{(i)}$ permuting the
other players, whose orbits on level profiles are exactly the fibers of
$x\mapsto(x_i,\mathrm{type}(x_{-i}))$; hence $\psi_i$ is a linear combination of the
orbit-indicator functionals, which are the readings $F_{i,\ell,\tau}$. Equivariance
under general $\sigma$ forces the coefficients to be independent of $i$. The converse
is a direct check. For player--level pairs, apply the argument to each functional
$\psi_{i,j}$ separately: equivariance acts on the player index only, and each
$\psi_{i,j}$ is again $S_{n-1}^{(i)}$-invariant.
\end{proof}

The class is broad but has boundaries, and we state both. \emph{Inside}: any value
computed linearly from a player's levels and the census of the others. This covers the values of \cite{vdN,KSZ,CS00,PZ05,Lowing}, the symmetric members of the Sharing-value families
\cite{LY25}, the abstention and multichoice power indices \cite{FM,FZ,Freixas05,Tchantcho,Freixas20}, the bisemivalues on bicooperative games \cite{DGP20} (and, insofar as they are linear and player-symmetric, newly proposed bicooperative solutions such as the 2-Shapley value \cite{BGGJ26}), the importance indices of \cite{GLR} (see \cite{Grabisch16} for the $k$-ary
framework), the signed interaction index of \cite{RGL20} (an order-$d$ probe; see
\S\ref{sec:ladder}), and the feature-importance solutions of \cite{FLSN}.
Theorem~\ref{thm:main} upper-bounds all of them simultaneously, present and future.
\emph{Outside}: Hsiao--Raghavan values \cite{HR} with \emph{player-dependent} weight systems, which exit the symmetric framework exactly as weighted Shapley values do in the binary world (player-independent action weights, by contrast, stay inside: the coefficients in Definition~\ref{def:anonymous} may depend on levels arbitrarily); frameworks in which players have heterogeneous level counts
$m_i$, unless levels are equalized; and nonlinear or set-valued concepts, such as the absolute interaction index of \cite{RGL20} and the core, to which linear-information bounds do not directly apply.

\section{The characterization}\label{sec:main}

\emph{We now prove the paper's central result. One theorem identifies the joint
information of every anonymous value at every level count, its proof is a single
application of the branching rule, and the binary theory of \cite{KW85a,ADG,HLJS} is
recovered as the case $m=2$.}

We use standard symmetric-group representation theory \cite{Sagan}: irreducible
$S_n$-modules $S^\lambda$ are indexed by partitions $\lambda\vdash n$;
$(\C^m)^{\otimes n}$ contains $S^\lambda$ iff $\lambda$ has at most $m$ rows, with
multiplicity $\dim_{GL_m}\lambda$ (Schur--Weyl duality \cite{FH}); and the branching rule states that $S^\lambda$ restricted to $S_{n-1}$ is the sum of $S^\mu$ over all $\mu$ obtained by removing one corner cell of $\lambda$.

The entire paper rests on the following lemma, stated once for all probe orders; the main theorem is its order-one case, and Section~\ref{sec:ladder} will use it at full strength.

\begin{lemma}[symmetry--visibility lemma]\label{lem:sv}
Fix $0\le d\le n$ and, for a $d$-element set $D\subseteq\N$, let $S_{n-d}^{(D)}$ be the subgroup of $S_n$ fixing $D$ pointwise. The $S_n$-module generated by the $S_{n-d}^{(D)}$-invariant vectors of $(\C^m)^{\otimes n}$ (the same module for every choice of $D$, hence equal to the span of the invariants over all $D$) is the isotypic component of the partitions $\lambda$ with at most $d$ cells outside the first row (and at most $m$ rows).
\end{lemma}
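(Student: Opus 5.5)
The plan is to reduce the statement to a question about invariants inside each isotypic component, and then answer that question with the iterated branching rule. Write $W=(\C^m)^{\otimes n}=\bigoplus_{\lambda} W_\lambda$ for the isotypic decomposition under $S_n$. Here $W_\lambda\cong S^\lambda\otimes U_\lambda$, with $U_\lambda$ the multiplicity space; by Schur--Weyl duality $U_\lambda$ is nonzero exactly when $\lambda$ has at most $m$ rows. Let $H=S_{n-d}^{(D)}$. Since the projections onto isotypic components are $S_n$-equivariant, hence $H$-equivariant, the space of $H$-invariants splits as $W^H=\bigoplus_\lambda W_\lambda^H$. Moreover $W_\lambda^H=(S^\lambda)^H\otimes U_\lambda$.

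The $S_n$-module generated by $W^H$ is therefore $\bigoplus_\lambda \bigl(S_n\cdot (S^\lambda)^H\bigr)\otimes U_\lambda$. Because $S^\lambda$ is irreducible, $S_n\cdot(S^\lambda)^H$ is either $0$ or all of $S^\lambda$. So the generated module is exactly the sum of those $W_\lambda$ for which $(S^\lambda)^{H}\neq0$ and $U_\lambda\neq0$. This also disposes of the parenthetical claim. Every $S_{n-d}^{(D')}$ is conjugate to $S_{n-d}^{(D)}$, and conjugation by $g$ carries $H$-invariants to $gHg^{-1}$-invariants. Hence the generated module does not depend on $D$, and it contains, and is contained in, the span of all the invariants over all $D$.

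The remaining task is to determine when $(S^\lambda)^{S_{n-d}}\neq0$, i.e.\ when $\mathrm{Res}^{S_n}_{S_{n-d}}S^\lambda$ contains the trivial module $S^{(n-d)}$. Here $S_{n-d}$ is embedded as the permutations of $\N\setminus D$. I will iterate the branching rule $d$ times, restricting through the chain $S_n\supset S_{n-1}\supset\cdots\supset S_{n-d}$ obtained by fixing the points of $D$ one at a time. The result is that the multiplicity of $S^\mu$ in the restriction equals the number of ways to remove $d$ cells from $\lambda$ one corner at a time, ending at $\mu$. Equivalently, it is the number of standard fillings of the skew shape $\lambda/\mu$. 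In particular it is nonzero iff $\mu\subseteq\lambda$.

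Taking $\mu=(n-d)$, the condition becomes $(n-d)\subseteq\lambda$, i.e.\ $\lambda_1\ge n-d$. Since $|\lambda|=n$, this is equivalent to $n-\lambda_1\le d$, which says precisely that $\lambda$ has at most $d$ cells outside the first row.

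I expect the main obstacle to be bookkeeping rather than a deep step. It has two parts:
\begin{itemize}
\item justifying that the iterated branching description of restriction is valid, in particular that it does not depend on the order in which points of $D$ are fixed, since the subgroups are conjugate anyway;
\item stating cleanly that "module generated" equals the full isotypic component, which needs both the Schur--Weyl nonvanishing of $U_\lambda$ and irreducibility.
\end{itemize}
Two degenerate cases should also be checked. At $d=0$ the answer is the single constituent $(n)$, i.e.\ the symmetric tensors. At $d\ge n-1$ every $\lambda$ with at most $m$ rows qualifies, so the generated module is all of $W$.
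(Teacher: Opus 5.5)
Your proposal is correct and follows the paper's proof essentially step for step. You split the invariants as $W^H=\bigoplus_\lambda (S^\lambda)^H\otimes U_\lambda$, which is the paper's point that the invariants already contain $u\otimes M_\lambda$, so the multiplicity space is exhausted before generation. You then use irreducibility of $S^\lambda$ to fill each component, conjugacy of the subgroups $S_{n-d}^{(D)}$ for independence of $D$, and the iterated branching rule to reduce $(S^\lambda)^{S_{n-d}}\neq 0$ to $\lambda_1\ge n-d$.
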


\begin{proof}
Decompose $(\C^m)^{\otimes n}\cong\bigoplus_\lambda S^\lambda\otimes M_\lambda$ with multiplicity spaces $M_\lambda$ of dimension $\dim_{GL_m}\lambda$. The $S_{n-d}^{(D)}$-invariant vectors form $\bigoplus_\lambda \Inv_{S_{n-d}}(S^\lambda)\otimes M_\lambda$, and iterating the branching rule $d$ times, $\Inv_{S_{n-d}}(S^\lambda)\neq0$ iff $(n-d)$ is reachable from $\lambda$ by removing $d$ corner cells, iff $\lambda$ has at most $d$ cells outside its first row. This proves containment. For the converse, multiplicities matter: an $S_n$-submodule of an isotypic component $S^\lambda\otimes M_\lambda$ need not be the whole component, since it can be $S^\lambda\otimes M'$ for a proper subspace $M'\subsetneq M_\lambda$, so it does not suffice to ``meet'' each constituent. But the invariant space itself already contains $u\otimes M_\lambda$ for any nonzero $u\in\Inv_{S_{n-d}}(S^\lambda)$: the multiplicity space is exhausted before generation begins. The $S_n$-module generated by $u\otimes M_\lambda$ is $\langle S_n\,u\rangle\otimes M_\lambda=S^\lambda\otimes M_\lambda$ by irreducibility of $S^\lambda$; and both containment and generation are independent of the choice of $D$, since the subgroups $S_{n-d}^{(D)}$ are conjugate in $S_n$.
\end{proof}

\begin{theorem}[characterization]\label{thm:main}
For every $n\ge2$ and $m\ge2$,
\[
\mathcal V_{n,m}\;=\;\text{the }\bigl(S^{(n)}\oplus S^{(n-1,1)}\bigr)\text{-isotypic
component of }(\C^m)^{\otimes n}.
\]
\end{theorem}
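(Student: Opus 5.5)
The plan is to derive Theorem~\ref{thm:main} as the case $d=1$ of Lemma~\ref{lem:sv}. Two things must be checked. First, $\mathcal V_{n,m}$ is exactly the span of the $S_{n-1}^{(\{i\})}$-invariant vectors over all $i$. Second, the partitions with at most one cell outside the first row, and at most $m$ rows, are exactly $(n)$ and $(n-1,1)$.

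\emph{Step 1: the readings span the invariants.} Fix a player $i$. The group $S_{n-1}^{(\{i\})}$ permutes the profile basis $\{e_x\}$, so its invariant vectors in $(\C^m)^{\otimes n}$ are spanned by the orbit sums. As in the proof of Lemma~\ref{lem:anonymous}, the orbits of this group on profiles are precisely the fibers of $x\mapsto(x_i,\mathrm{type}(x_{-i}))$. Hence $\Inv_{S_{n-1}^{(\{i\})}}\bigl((\C^m)^{\otimes n}\bigr)=\spn\{r_{i,\ell,\tau}\}_{\ell,\tau}$, and summing over $i$ gives $\mathcal V_{n,m}=\sum_i \Inv_{S_{n-1}^{(\{i\})}}$.

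\emph{Step 2: apply Lemma~\ref{lem:sv} with $d=1$.} The lemma asserts that the $S_n$-module generated by the invariants for one $D=\{i\}$ coincides with the span of the invariants over all $D$. By Step 1 that span is $\mathcal V_{n,m}$. The span over all $i$ is already $S_n$-stable, since $\sigma r_{i,\ell,\tau}=r_{\sigma(i),\ell,\tau}$, so no further generation is needed. The lemma then identifies $\mathcal V_{n,m}$ with the isotypic component of all $\lambda$ having at most one cell outside the first row, which are $\lambda=(n)$ and $\lambda=(n-1,1)$.

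\emph{Step 3: check that both constituents occur for every $m\ge2$.} Each of these partitions has at most two rows, so each actually appears in $(\C^m)^{\otimes n}$ for every $m\ge2$ and every $n\ge2$. The statement is therefore exactly the displayed isotypic component.

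There is essentially no obstacle, since all the substance sits in Lemma~\ref{lem:sv}. The one point needing care is the multiplicity issue, namely that $\mathcal V_{n,m}$ is the full isotypic component and not merely $S^\lambda\otimes M'$ for some proper $M'\subsetneq M_\lambda$. The lemma's converse argument already settles this: the invariants contain $u\otimes M_\lambda$ for the one-dimensional space $\Inv_{S_{n-1}}(S^\lambda)$, and $S_n$ then generates all of $S^\lambda\otimes M_\lambda$. I would restate that point explicitly, together with the identification of the orbit sums as $S_{n-1}$-invariants in Step 1.
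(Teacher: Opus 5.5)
Your proposal is correct and is essentially the paper's proof: you identify player $i$'s reading vectors as the orbit-sum basis of the $S_{n-1}^{(i)}$-invariants, observe that $\mathcal V_{n,m}$ is the $S_n$-stable span of these (hence the module they generate), and apply Lemma~\ref{lem:sv} with $d=1$. Your Step~3 is harmless but not needed, since the equality with the isotypic component would hold even if one of the constituents were absent.
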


\noindent\emph{Proof idea. The readings of a single player are the orbit sums spanning the full space of invariants under permutations of the other players; the lemma converts that invariance into the two-row characterization.}

\begin{proof}
Fix $i$. The orbits of $S_{n-1}^{(i)}$ on level profiles are precisely the fibers of $x\mapsto(x_i,\mathrm{type}(x_{-i}))$, so the reading vectors $r_{i,\ell,\tau}$ are the orbit-sum basis of the \emph{full} space of $S_{n-1}^{(i)}$-invariant vectors. Since $\mathcal V_{n,m}$ is an $S_n$-submodule (permuting players permutes readings) and every reading is an $S_n$-translate of a reading of player $i$, it is exactly the $S_n$-module generated by these invariants, and Lemma~\ref{lem:sv} with $d=1$ identifies it: the partitions with at most one cell outside the first row are $(n)$ and $(n-1,1)$. (Concretely, for $\lambda=(n-1,1)$ realized as the sum-zero subspace of $\C^n$, the invariant line attached to player $i$ is spanned by $e_i-\tfrac1n\sum_j e_j$, and these vectors span the module as $i$ varies.)
\end{proof}

Pooled, the anonymous instruments see the trivial-plus-standard part of game space and nothing else.

\begin{remark}[the binary case, and where it sits]
At $m=2$, Theorem~\ref{thm:main} specializes to the known statements of
\cite{KW85a,ADG,HLJS}: symmetric linear values see the $(n)\oplus(n-1,1)$ part of
binary game space, and the common kernel is everything else. The content of the theorem is that this is not a fact about subsets: it is a fact about $S_{n-1}$-invariance, valid verbatim for every level structure, with the branching rule as the entire mechanism. The binary theory is the $m=2$ shadow; the ternary order-one statement, with payoffs in $\R^n$, is similarly implicit in the slice-by-slice analysis of \cite{SK22}, and the lemma replaces those slice constructions with one branching computation valid for every $m$ and every probe order.
\end{remark}

\begin{remark}[not a low-degree statement]
Readers accustomed to Fourier or Efron--Stein decompositions of functions on product
spaces should note that visibility is governed by \emph{symmetry type}, not degree.
The visible space contains fully symmetric functions of every degree up to $n$;
conversely, the chirality of Example~\ref{ex:chirality}, blind in its entirety, has
vanishing degree-one part but nonzero Efron--Stein components already at degree two.
The two decompositions are transverse.
\end{remark}

\section{Dimensions}\label{sec:dims}

\emph{With the visible space identified, the next question is its size. This section
computes the dimension exactly and draws the comparison that drives the rest of the
paper: the visible space grows polynomially in $n$, the game space exponentially, and
raising $m$ widens the gap.}

\begin{theorem}[dimensions]\label{thm:dims}
$\dim_{GL_m}(n)=\binom{n+m-1}{m-1}$ and $\dim_{GL_m}(n-1,1)=(n-1)\binom{n+m-2}{n}$.
Hence
\[
\dim\mathcal V_{n,m}\;=\;\binom{n+m-1}{m-1}\;+\;(n-1)^2\binom{n+m-2}{n},
\]
polynomial in $n$ of degree $m$, and the blind space
$\mathcal N_{n,m}$ has dimension $m^n-\dim\mathcal V_{n,m}$.
(The anchor $v(\mathbf 0)=0$ removes one \emph{visible} dimension, because the all-inactive profile is $S_n$-invariant and hence visible, and leaves the blind dimension unchanged.) For $m=3$ the blind dimensions
at $n=2,\dots,8$ are $0,\,1,\,21,\,126,\,526,\,1863,\,6075$; for $m=2$ the formula
reproduces the classical $2^n-n^2+n-2$.
\end{theorem}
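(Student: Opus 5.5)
Theorem~\ref{thm:main} identifies $\mathcal V_{n,m}$ with $S^{(n)}\otimes M_{(n)}\oplus S^{(n-1,1)}\otimes M_{(n-1,1)}$. Hence
\[
\dim\mathcal V_{n,m}=\dim S^{(n)}\cdot\dim_{GL_m}(n)+\dim S^{(n-1,1)}\cdot\dim_{GL_m}(n-1,1)=\dim_{GL_m}(n)+(n-1)\dim_{GL_m}(n-1,1),
\]
so the proof reduces to computing the two Schur--Weyl multiplicities. I would compute them directly from the permutation module rather than quote the hook-content formula, since the direct route is self-contained.

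\textbf{Trivial multiplicity.} $\dim_{GL_m}(n)$ equals the dimension of the $S_n$-invariants of $(\C^m)^{\otimes n}$. Invariants are spanned by orbit sums, one for each type, i.e.\ each multiset of size $n$ from $m$ levels. There are $\binom{n+m-1}{m-1}$ of them.

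\textbf{Standard multiplicity.} By Lemma~\ref{lem:sv} with $d=1$ (equivalently the branching rule), the $S_{n-1}$-invariants satisfy
\[
\dim\Inv_{S_{n-1}}(\C^m)^{\otimes n}=\dim_{GL_m}(n)+\dim_{GL_m}(n-1,1),
\]
because each of $S^{(n)}$ and $S^{(n-1,1)}$ contains a one-dimensional $S_{n-1}$-invariant line, and no other constituent contains any. The left side counts the pairs $(\ell,\tau)$, which number $m\binom{n+m-2}{m-1}$. Therefore
\[
\dim_{GL_m}(n-1,1)=m\binom{n+m-2}{m-1}-\binom{n+m-1}{m-1}.
\]
The main computational step is to check that this equals $(n-1)\binom{n+m-2}{n}$. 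Writing $N=n+m-2$, we have $\binom{N+1}{m-1}=\frac{N+1}{n}\binom{N}{m-1}$, so the difference is $\binom{N}{m-1}\bigl(m-\frac{n+m-1}{n}\bigr)=\binom{N}{m-1}\frac{(n-1)(m-1)}{n}$. Then $\binom{N}{m-1}\frac{m-1}{n}=\binom{N}{n}$, since $\binom{N}{m-1}=\binom{N}{n-1}$ and $\binom{N}{n-1}\frac{m-1}{n}=\binom{N}{n}$. This gives the claimed formula, and multiplying by $n-1$ yields the stated visible dimension.

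\textbf{Remaining claims.} The degree in $n$ follows because $\binom{n+m-1}{m-1}$ has degree $m-1$ and $\binom{n+m-2}{n}=\binom{n+m-2}{m-2}$ has degree $m-2$, so $(n-1)^2\binom{n+m-2}{m-2}$ has degree $m$ with positive leading coefficient $1/(m-2)!$ when $m\ge3$. For $m=2$ the second summand is $(n-1)^2$, of degree $2=m$, matching the first row. The blind dimension is the orthogonal complement count. For the anchor: restricting to games with $v(\mathbf 0)=0$ means intersecting with the hyperplane $e_{\mathbf 0}^\perp$. Since $e_{\mathbf 0}$ is $S_n$-invariant, it lies in $\mathcal V_{n,m}$, so $\mathcal N_{n,m}\subseteq e_{\mathbf 0}^\perp$. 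The cut therefore removes one visible dimension and no blind ones. Finally I would plug in values: for $m=2$ the visible dimension is $(n+1)+(n-1)^2=n^2-n+2$, giving $2^n-n^2+n-2$. For $m=3$ the visible dimension is $\binom{n+2}{2}+(n-1)^2(n+1)$, which at $n=3$ is $10+16=26$, blind $1$. The other listed values follow by arithmetic, which I would also confirm by the computations in the appendix.
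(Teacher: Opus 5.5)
Your proof is correct, and it differs from the paper's in how the multiplicity of $S^{(n-1,1)}$ is obtained. The paper evaluates $\dim_{GL_m}(n-1,1)$ with the hook content formula, $\prod_c(m+\mathrm{content}(c))/\mathrm{hook}(c)=(m-1)m(m+1)\cdots(m+n-2)/(n\cdot(n-2)!)$. You instead count $S_{n-1}$-orbits on profiles, $m\binom{n+m-2}{m-1}$. You then use the multiplicity-free branching rule to obtain $\dim_{GL_m}(n)+\dim_{GL_m}(n-1,1)=m\binom{n+m-2}{m-1}$, and subtract; your binomial manipulation checks out.

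One point of attribution: Lemma~\ref{lem:sv} as stated only gives nonvanishing of $\Inv_{S_{n-1}}(S^\lambda)$. The fact that the invariant space is exactly one-dimensional for $(n)$ and $(n-1,1)$ comes from the branching rule itself, as your parenthetical says, or from Frobenius reciprocity with $\mathrm{Ind}_{S_{n-1}}^{S_n}\mathbf 1=\C^n=S^{(n)}\oplus S^{(n-1,1)}$.

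What each route buys: yours is self-contained and uses only the mechanism the paper already relies on. It also identifies the number of single-player readings with $\dim_{GL_m}(n)+\dim_{GL_m}(n-1,1)$, a pleasant link to the reading count in Section~\ref{sec:algo}. The hook-content route is a one-line black box that works uniformly for every $\lambda$. That uniformity is what one needs for the dimensions of other constituents, such as the ladder rungs, where orbit counting does not invert as cleanly. You also spell out the degree claim, the anchor argument, and the $m=2,3$ numerics, which the paper leaves to the statement and to computation.
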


\begin{proof}
$\dim_{GL_m}(n)$ counts weakly increasing words (symmetric power). For $(n-1,1)$ the
hook content formula \cite{Stanley} gives
$\prod_{c}(m+\mathrm{content}(c))/\mathrm{hook}(c)
=(m-1)\,m(m+1)\cdots(m+n-2)\big/\bigl(n\cdot(n-2)!\bigr)$, which simplifies to
$(n-1)\binom{n+m-2}{n}$ (at $m=3$: $n^2-1$; at $m=2$: $n-1$). The multiplicity of
$S^{(n-1,1)}$ contributes a factor $\dim S^{(n-1,1)}=n-1$. The numerical sequences are
verified in Appendix~\ref{app:verification}, including by explicit measurement-matrix
ranks ($26/27$ at $n=3$, $60/81$ at $n=4$ for $m=3$).
\end{proof}

\begin{remark}[expressiveness backfires]
Two comparisons quantify the gap. At fixed $m$, the visible dimension grows polynomially in $n$ while the game space grows exponentially. At fixed $n\ge3$, both are polynomials of degree $n$ in $m$, with leading coefficients $\bigl(1+(n-1)^2\bigr)/n!$ and $1$ respectively, so the visible \emph{fraction} converges to $\bigl(1+(n-1)^2\bigr)/n!$ as $m\to\infty$: at $n=8$, below $0.13\%$. (At $n=2$ the blind space is zero for every $m$.) Upgrading a data market from include/exclude to quality tiers makes valuations less complete unless the probes upgrade too (Section~\ref{sec:applications}).
\end{remark}

\section{What hides: three players, and an explicit spanning family}\label{sec:three}

\emph{The best-known constant of the binary theory is its four-player threshold: with
three players, nothing hides \cite{ADG}. This section shows the threshold is an
artifact of binarity, and then answers the harder question, not how large the blind space is but what is in it, with an explicit family of four-profile $\pm1$ games that spans it for every $n\ge4$, the determinant games covering $n=3$.}

\begin{theorem}[three can hide]\label{thm:three}
For $m\ge3$, $\dim\mathcal N_{3,m}=\binom m3>0$: the blind space is nonzero already at
three players, and it is spanned by \emph{determinant games}: for each triple of
levels $a<b<c$, the game
\[
\varepsilon_{abc}(\ell_1,\ell_2,\ell_3)=
\begin{cases}
\operatorname{sgn}(\sigma) & \text{if }(\ell_1,\ell_2,\ell_3)=\sigma(a,b,c)
\text{ for a permutation }\sigma,\\
0 & \text{otherwise.}
\end{cases}
\]
For $m=2$ the blind space at $n=3$ is zero (the classical four-player threshold
\cite{ADG}).
\end{theorem}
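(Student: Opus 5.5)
By Theorem~\ref{thm:main}, $\mathcal N_{3,m}$ is the isotypic component of $(\C^m)^{\otimes 3}$ for the partitions with more than one cell outside the first row. For $n=3$ the only such partition is $(1,1,1)$, the sign representation. It occurs only when $m\ge3$, with multiplicity $\dim_{GL_m}(1,1,1)=\binom m3$. The plan is therefore to show that $\mathcal N_{3,m}$ equals the sign-isotypic component, and that this component is $\Lambda^3\C^m$, spanned by the determinant games.

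\textbf{Step 1: dimension.} Theorem~\ref{thm:dims} gives
\[
\dim\mathcal V_{3,m}=\binom{m+2}{3}+4\binom{m+1}{3}.
\]
Expanding both sides, one checks
\[
m^3=\binom{m+2}{3}+4\binom{m+1}{3}+\binom m3 .
\]
This is just the Schur--Weyl decomposition for $n=3$, using $\dim S^{(2,1)}=2$ and $\dim_{GL_m}(2,1)=m(m^2-1)/3$. It yields $\dim\mathcal N_{3,m}=\binom m3$, which is positive iff $m\ge3$. At $m=2$ it gives $0$, which is the binary threshold statement.

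\textbf{Step 2: the determinant games are blind.} For each reading $r_{i,\ell,\tau}$ we show $\langle r_{i,\ell,\tau},\varepsilon_{abc}\rangle=0$, exactly as in Example~\ref{ex:chirality}. The transposition swapping the two players other than $i$ preserves $x_i$ and $\mathrm{type}(x_{-i})$, so it maps the fiber of $r_{i,\ell,\tau}$ to itself. It also negates $\varepsilon_{abc}$, since $\varepsilon_{abc}$ is alternating in the tensor slots. Profiles fixed by this transposition have two equal coordinates, where $\varepsilon_{abc}$ vanishes; the remaining profiles pair off and cancel. Hence each $\varepsilon_{abc}$ lies in $\mathcal V_{3,m}^\perp=\mathcal N_{3,m}$.

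\textbf{Step 3: they span.} The games $\varepsilon_{abc}$, for $a<b<c$, have pairwise disjoint supports: each is supported on the six permutations of a distinct set $\{a,b,c\}$. They are therefore linearly independent, and there are $\binom m3$ of them. By Step 1 they form a basis of $\mathcal N_{3,m}$.

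The anchor $v(\mathbf 0)=0$ is harmless here: every $\varepsilon_{abc}$ vanishes at $\mathbf 0$, and Theorem~\ref{thm:dims} already notes that the anchor leaves the blind dimension unchanged. The only step with any content is the dimension identity in Step 1. It is a routine polynomial check, or can be read off directly from Schur--Weyl duality; I expect no real obstacle. One caution is to confirm that Step 2 covers \emph{every} reading, including those with $\ell\in\{a,b,c\}$ and with types having repeated levels. The swap argument handles all of these uniformly.
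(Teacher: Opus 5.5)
Your proof is correct, but it is organized differently from the paper's.

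The paper argues structurally. By Theorem~\ref{thm:main}, the blind space is the sum of the isotypic components other than $(3)$ and $(2,1)$. At $n=3$ that is only the sign-isotypic component $S^{(1,1,1)}\otimes\Lambda^3\C^m$. The sign-isotypic vectors of $(\C^m)^{\otimes 3}$ are the alternating tensors, spanned by the antisymmetrizations of $e_a\otimes e_b\otimes e_c$, i.e.\ by the $\varepsilon_{abc}$. Blindness and spanning therefore come out of a single identification.

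You instead use Theorem~\ref{thm:main} only through the numerical formula of Theorem~\ref{thm:dims}. You prove blindness of each $\varepsilon_{abc}$ directly by the transposition pairing of Example~\ref{ex:chirality}. You then close with disjoint supports plus the dimension match, so the identification with $\Lambda^3\C^m$ announced in your plan is never actually needed.

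\emph{What your route buys:} the inclusion $\spn\{\varepsilon_{abc}\}\subseteq\mathcal N_{3,m}$ becomes elementary and independent of the representation theory. So does the headline fact $\dim\mathcal N_{3,m}\ge\binom m3>0$ for $m\ge3$. Schur--Weyl duality and the generation half of Lemma~\ref{lem:sv} enter only for the upper bound $\dim\mathcal N_{3,m}\le\binom m3$.

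\emph{What the paper's route buys:} a conceptual description with no dimension bookkeeping. The blind space at three players \emph{is} the alternating part $\Lambda^3\C^m$, which is what underwrites the chirality reading. The argument is also the $n=3$ instance of the column-antisymmetrizer method used for Theorem~\ref{thm:span}.
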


\begin{proof}
At $n=3$ the partitions outside $\{(3),(2,1)\}$ with at most $m$ rows are exactly
$(1,1,1)$, present iff $m\ge3$, contributing $S^{(1,1,1)}\otimes\Lambda^3\C^m$ of
dimension $1\cdot\binom m3$. The sign-isotypic vectors of $(\C^m)^{\otimes3}$ are
spanned by the antisymmetrizations of $e_a\otimes e_b\otimes e_c$ over level-triples,
which are precisely the games $\varepsilon_{abc}$. Verified: every aggregated reading
annihilates $\varepsilon_{012}$ exactly (Appendix~\ref{app:verification}).
\end{proof}

\begin{remark}[correlation versus chirality]
Example~\ref{ex:chirality} displays $\varepsilon_{012}$ in full. The contrast with the
binary case is worth naming: the minimal binary camouflage is a \emph{correlation}, a four-cycle of pairwise synergies that needs four players to balance \cite{ADG}, while the minimal graded camouflage is a \emph{chirality}, balanced by three players
because antisymmetry, not pairing, does the canceling. With binary participation,
payment schemes cannot see who pairs with whom; with graded participation they also
cannot see which way the division of labor turns.
\end{remark}

What hides at general $n$? The next theorem answers with an object as concrete as the
chirality. For a profile $w$ and a pair of players $\{a,b\}$, write $(ab)w$ for the
profile with the levels of $a$ and $b$ exchanged, and $\delta_x$ for the game worth
$1$ at profile $x$ and $0$ elsewhere.

\begin{theorem}[spanning family]\label{thm:span}
For every $n\ge4$ and $m\ge2$, the blind space $\mathcal N_{n,m}$ is spanned by the
\emph{double-swap games}
\[
D_{w;ab,cd}\;=\;\delta_w-\delta_{(ab)w}-\delta_{(cd)w}+\delta_{(ab)(cd)w},
\]
where $\{a,b\}$ and $\{c,d\}$ are disjoint pairs of players and $w$ is any profile
with $w_a\neq w_b$ and $w_c\neq w_d$. Each $D_{w;ab,cd}$ takes values $\pm1$ on four
profiles and $0$ elsewhere. For $n=3$ the blind space is spanned by the determinant
games of Theorem~\ref{thm:three}. At $m=2$ the double-swap games span the classical
common kernel of the semivalues, giving a four-coalition $\{\pm1\}$ family of the same
kind as \cite{ADG}.
\end{theorem}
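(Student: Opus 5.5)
The plan has two halves: show that every double-swap game is blind, then show that the double-swap games exhaust $\mathcal N_{n,m}$. Both halves rest on one observation. Writing $\sigma$ also for the permutation operator on $(\C^m)^{\otimes n}$, we have
\[
D_{w;ab,cd}=\bigl(1-(ab)\bigr)\bigl(1-(cd)\bigr)\delta_w .
\]
Each transposition is a permutation of the profile basis, hence unitary and self-adjoint, and $(ab)$ commutes with $(cd)$ because the pairs are disjoint. So $T_{ab,cd}:=(1-(ab))(1-(cd))$ is a self-adjoint operator.

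\emph{Step 1 (double swaps are blind).} Take a reading $r_{i,\ell,\tau}$. Since $\{a,b\}$ and $\{c,d\}$ are disjoint, $i$ avoids at least one pair, say $i\notin\{a,b\}$. Then $(ab)$ lies in $S_{n-1}^{(i)}$ and fixes $r_{i,\ell,\tau}$, so
\[
\langle r_{i,\ell,\tau},D_{w;ab,cd}\rangle=\langle (1-(cd))(1-(ab))r_{i,\ell,\tau},\delta_w\rangle=0 .
\]
This shows $W:=\spn\{D_{w;ab,cd}\}\subseteq\mathcal N_{n,m}$. The restriction $w_a\ne w_b$, $w_c\ne w_d$ only discards terms that vanish anyway, so $W$ equals the span of all $T_{ab,cd}\delta_w$, i.e.\ $W=\sum_{ab,cd}\operatorname{im}T_{ab,cd}$.

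\emph{Step 2 (reduce spanning to a statement about Specht modules).} By self-adjointness,
\[
W^\perp=\bigcap_{ab,cd}\ker T_{ab,cd}.
\]
Conjugation permutes the operators $T_{ab,cd}$ among themselves, so $W^\perp$ is an $S_n$-submodule. We must show $W^\perp\subseteq\mathcal V_{n,m}$. By Theorem~\ref{thm:main} it suffices to show that $W^\perp$ contains no copy of $S^\lambda$ when $\lambda$ has at least two cells outside the first row. All the operators $T_{ab,cd}$ are conjugate to $T:=T_{12,34}$, so it suffices to show that $T$ acts nonzero on $S^\lambda$ for every such $\lambda$ (with $n\ge4$).

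\emph{Step 3 (the key representation-theoretic check, the main obstacle).} The operator $T$ equals $4$ times the projection onto the $\mathrm{sgn}\otimes\mathrm{sgn}$-isotypic part for $S_2\times S_2=\langle(12)\rangle\times\langle(34)\rangle$. So $T$ is nonzero on $S^\lambda$ iff $S^\lambda\downarrow_{S_2\times S_2}$ contains $\mathrm{sgn}\otimes\mathrm{sgn}$. I would factor this restriction through $S_4\times S_{n-4}$:
\begin{itemize}
\item By Littlewood--Richardson, $S^\lambda\downarrow_{S_4\times S_{n-4}}$ contains $S^\nu\otimes(\cdot)$ for every $\nu\subseteq\lambda$ with $\nu\vdash4$.
\item By Frobenius reciprocity and Pieri's rule, $\mathrm{Ind}_{S_2\times S_2}^{S_4}(\mathrm{sgn}\otimes\mathrm{sgn})=S^{(2,2)}\oplus S^{(2,1,1)}\oplus S^{(1^4)}$; the dimension check is $2+3+1=6$.
\end{itemize}
It then remains to check that every $\lambda$ with at least two cells outside row one contains one of $(2,2)$, $(2,1,1)$, $(1^4)$. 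If $\lambda_2\ge2$, then $\lambda\supseteq(2,2)$. Otherwise $\lambda_2=1$, so $\lambda$ has at least three rows. Then either $\lambda_1\ge2$, giving $\lambda\supseteq(2,1,1)$, or $\lambda=(1^n)$ with $n\ge4$, giving $\lambda\supseteq(1^4)$.

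Conversely, $(n)$ and $(n-1,1)$ contain none of these three partitions, which is consistent with Step 1. Combining Steps 1--3 gives $W^\perp\subseteq\mathcal V_{n,m}$. Since $\mathcal V_{n,m}^\perp=\mathcal N_{n,m}$, this yields $\mathcal N_{n,m}\subseteq W$, and hence $W=\mathcal N_{n,m}$ by Step 1.

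\emph{Remaining cases.} The case $n=3$ is Theorem~\ref{thm:three}. For $m=2$, Theorem~\ref{thm:main} identifies $\mathcal N_{n,2}$ with the sum of the constituents $S^{(n-j,j)}$, $j\ge2$. This is the classical semivalue kernel recalled in Section~\ref{sec:binary-recap}; the anchor does not matter, since $\delta_{\mathbf 0}$ is visible. So the double swaps, now four-coalition $\pm1$ games, span that kernel.

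The only delicate point is Step 3. The rest is formal once we use that $W$ is the image of a self-adjoint family and that $W^\perp$ is a submodule. Because $W^\perp$ is a submodule, ruling out whole copies of $S^\lambda$ suffices; we do not have to control the multiplicity spaces.
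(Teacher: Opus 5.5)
Your proof is correct, and its skeleton is the paper's. Both write $D_{w;ab,cd}$ as the antisymmetrizer of the column group $C=S_{\{a,b\}}\times S_{\{c,d\}}$ applied to $\delta_w$. Both then reduce spanning to the fact that this operator is nonzero on every blind Specht module. Where you differ is in how the two halves are verified.

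\textbf{The paper's route.} It settles both halves at once. Twisting by the sign character and applying Young's rule gives $\dim\mathrm{Hom}_{C}(\mathrm{sgn},\mathrm{Res}\,S^\mu)=K_{\mu'\lambda'}$ for $\lambda=(n-2,2)$. This is nonzero iff $\mu\trianglelefteq(n-2,2)$, iff $\mu_1\le n-2$. So the image of the antisymmetrizer lies in the blind shapes, which gives blindness via Theorem~\ref{thm:main}. It also meets each blind component in $W_\mu\otimes M_\mu$ with $W_\mu\neq0$, after which $S_n$-generation fills that component.

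\textbf{Your route.} You get blindness from the elementary fact that each reading vector $r_{i,\ell,\tau}$ is fixed by whichever of $(ab)$, $(cd)$ avoids $i$. This uses no representation theory, and not even Theorem~\ref{thm:main}. You get nonvanishing locally: you restrict to the $S_4$ on the four players, use $\mathrm{Ind}_{S_2\times S_2}^{S_4}(\mathrm{sgn}\otimes\mathrm{sgn})=S^{(2,2)}\oplus S^{(2,1,1)}\oplus S^{(1^4)}$, and finish with a containment check. You handle multiplicity spaces dually rather than by exhibiting $W_\mu\otimes M_\mu$ inside the image: $W^\perp=\bigcap\ker T_{ab,cd}$ is a submodule annihilated by an element of $\C[S_n]$ that is nonzero on every blind irreducible, so it can contain no copy of one.

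\textbf{What each buys.} The paper's computation gives more in one stroke. It yields the exact dimension $K_{\mu'\lambda'}$ of the sign part and an exact support criterion by dominance, read directly off the shape $(n-2,2)$. Yours is more elementary. Its new representation-theoretic input is limited to the branching rule and a six-dimensional computation in $S_4$, and its blindness half works directly on the readings rather than on isotypic components.
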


\noindent\emph{Proof idea. A double-swap game is a column antisymmetrization for the
shape $(n-2,2)$, and $(n-2,2)$ dominates every blind shape; Kostka positivity turns
dominance into spanning.}

\begin{proof}
Let $T$ be a filling of the diagram of $\lambda=(n-2,2)$ whose two columns of height
$2$ contain $\{a,b\}$ and $\{c,d\}$, and let
$\kappa_T=\sum_{\sigma\in C_T}\mathrm{sgn}(\sigma)\,\sigma$ be its column
antisymmetrizer, where the column group is the Young subgroup
$C_T=S_{\{a,b\}}\times S_{\{c,d\}}\cong S_{\lambda'}$ for
$\lambda'=(2,2,1,\dots,1)$. Then $\kappa_T\,\delta_w=D_{w;ab,cd}$, and
$\kappa_T\,\delta_w=0$ when $w_a=w_b$ or $w_c=w_d$; so as $w$ varies, the double-swap
games for the fixed pair of pairs span exactly the image
$\kappa_T\bigl((\C^m)^{\otimes n}\bigr)$.

Since $\kappa_T$ is $|C_T|$ times the $C_T$-equivariant projection onto the
$\mathrm{sgn}_{C_T}$-isotypic subspace (a projection for the column-subgroup action, not an $S_n$-central one), its image meets each component
$S^\mu\otimes M_\mu$ in $W_\mu\otimes M_\mu$, where $W_\mu$ is the
$\mathrm{sgn}$-isotypic part of $\mathrm{Res}_{C_T}S^\mu$. Twisting by the sign
character,
\[
\mathrm{Hom}_{C_T}\!\bigl(\mathrm{sgn},\mathrm{Res}\,S^\mu\bigr)\;\cong\;
\mathrm{Hom}_{C_T}\!\bigl(\mathbf 1,\mathrm{Res}\,(S^\mu\otimes\mathrm{sgn})\bigr)
\;\cong\;
\mathrm{Hom}_{C_T}\!\bigl(\mathbf 1,\mathrm{Res}\,S^{\mu'}\bigr),
\]
which by Young's rule has dimension the Kostka number $K_{\mu'\lambda'}$, nonzero iff
$\mu'\trianglerighteq\lambda'$, iff $\mu\trianglelefteq(n-2,2)$ \cite{Sagan}.
Dominance $\mu\trianglelefteq(n-2,2)$ holds iff $\mu_1\le n-2$: necessity is the first
partial sum, and sufficiency is immediate since all later partial sums of $(n-2,2)$
equal $n$. So the image of $\kappa_T$ is supported on exactly the blind shapes
$\{\mu:\mu_1\le n-2\}$, with the full multiplicity space $M_\mu$ over each.

Thus every double-swap game is blind, and the fixed pair of pairs already exhausts
each multiplicity space: the image contains $u\otimes M_\mu$ with $u\neq0$ for every
blind $\mu$. The span over all choices of $\{a,b\},\{c,d\}$ is an $S_n$-submodule
(relabeling players permutes the family), and as in the proof of
Theorem~\ref{thm:main}, irreducibility fills each component, giving all of
$\mathcal N_{n,m}$.

For $n=3$ the shape $(n-2,2)$ does not exist; the unique blind shape is $(1,1,1)$,
whose column antisymmetrizer is the full alternating sum, producing the determinant
games. At $m=2$ a profile with $w_a\neq w_b$ and $w_c\neq w_d$ is a coalition
containing exactly one of $a,b$ and exactly one of $c,d$; the four support profiles
are the four equal-size coalitions obtained by exchanging memberships. The spanning
claim is verified by rank computation in eleven configurations
(Appendix~\ref{app:verification}).
\end{proof}

\begin{remark}[what invisibility looks like]
A double-swap game is the atom of invisible restructuring: take any configuration,
exchange the roles of two agents in one place and of two other agents in another, and
alternate the sign. Every anonymous instrument averages over exactly such exchanges,
so the alternation cancels. The theorem says these four-profile atoms not only hide but \emph{generate} everything that hides. The census sees how many sit at each level; it can never see who traded places with whom. At $m=2$ the same kernel now has three explicit spanning families: the shuffle games of \cite{ADG}, the dividend-coordinate four-cycle moves of \cite{Fried26}, and the coalition-coordinate double-swap games above.
\end{remark}

\section{The hierarchy and its cap}\label{sec:ladder}

\emph{An anonymous value is an order-$1$ probe: it examines one player at a time
against the census of the rest. This section asks what stronger probes buy. We climb
to order-$d$ readings, which examine $d$ players jointly, identify exactly what each
rung of the ladder sees, and then turn the ladder around: the same theorem that tells
auditors what order they need tells coalitions how large they must be to escape it. The proof of the evasion corollary is in Appendix~\ref{app:manip}.}

Order-$d$ readings distinguish a set $D$ of $d$ players jointly: for a level profile
$\boldsymbol\ell\in\{0,\dots,m-1\}^D$ and a type $\tau$ of the rest,
$F_{D,\boldsymbol\ell,\tau}(v)=\sum_{x:\ x_D=\boldsymbol\ell,\ \mathrm{type}(x_{-D})
=\tau}v(x)$. These are the natural multichoice analogues of the interaction functionals of
\cite{GrabischRoubens}; in the multichoice setting itself, the signed interaction
indices axiomatized by \cite{RGL20} are order-$d$ probes in exactly this sense, so
Theorem~\ref{thm:ladder} bounds them too.

\begin{theorem}[the ladder]\label{thm:ladder}
The joint span of all order-$\le d$ readings is the isotypic component of
$\{\lambda:\ |\lambda|-\lambda_1\le d,\ \ell(\lambda)\le m\}$. It equals the full
space iff $d\ge n-\lceil n/m\rceil$.
\end{theorem}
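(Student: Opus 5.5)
The plan mirrors the proof of Theorem~\ref{thm:main}, with Lemma~\ref{lem:sv} now used at full strength. Fix $d$ and a $d$-set $D$. The orbits of $S_{n-d}^{(D)}$ on level profiles are exactly the fibers of $x\mapsto(x_D,\mathrm{type}(x_{-D}))$. So the order-$d$ reading vectors $r_{D,\boldsymbol\ell,\tau}=\sum_{x:\,x_D=\boldsymbol\ell,\,\mathrm{type}(x_{-D})=\tau}e_x$ form the orbit-sum basis of the full space of $S_{n-d}^{(D)}$-invariant vectors. Permuting players permutes readings, and every $d$-set is a translate of $D$. Hence the span of all order-$d$ readings is the $S_n$-module generated by these invariants, and Lemma~\ref{lem:sv} identifies it as the isotypic component of $\{\lambda:|\lambda|-\lambda_1\le d,\ \ell(\lambda)\le m\}$.

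Next I pass from order exactly $d$ to order $\le d$. Any $S_{n-d'}^{(D')}$-invariant vector with $D'\subseteq D$ is also $S_{n-d}^{(D)}$-invariant. Concretely, a reading for $D'$ is a sum of readings for $D$, obtained by splitting the type of $x_{-D'}$ according to the levels on $D\setminus D'$. So the order-$\le d$ span equals the order-$d$ span. The same conclusion also follows from the nesting of the partition sets as $d$ grows.

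For the recovery criterion, the span is everything iff every $\lambda\vdash n$ with at most $m$ rows satisfies $n-\lambda_1\le d$. Every such constituent genuinely occurs by Schur--Weyl duality, since its multiplicity $\dim_{GL_m}\lambda$ is positive. So the condition is $d\ge n-\min\{\lambda_1:\lambda\vdash n,\ \ell(\lambda)\le m\}$, and I must show this minimum equals $\lceil n/m\rceil$. The lower bound holds because $\ell(\lambda)\le m$ gives $n=\sum\lambda_i\le m\lambda_1$, so $\lambda_1\ge n/m$. The bound is attained by the balanced partition: write $n=qm+r$ with $0\le r<m$, and take $r$ rows of length $q+1$ and $m-r$ rows of length $q$, dropping the zero rows when $q=0$. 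Its first row is $\lceil n/m\rceil$.

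The main obstacle would be the multiplicity issue, namely that generation fills whole isotypic components rather than proper subspaces $S^\lambda\otimes M'$. Lemma~\ref{lem:sv} already resolves this. The remaining points to check are that the readings span the full invariant space and not merely part of it, which is the orbit-fiber identification, and that the case $d=n$ is consistent (all profiles are then single orbits).
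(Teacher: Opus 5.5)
Your proposal is correct and follows essentially the same route as the paper: you identify the order-$d$ readings for a fixed $D$ as the orbit-sum basis of the $S_{n-d}^{(D)}$-invariants, absorb lower-order readings as sums of order-$d$ ones, apply Lemma~\ref{lem:sv}, and compute the cap $\max\{n-\lambda_1:\ell(\lambda)\le m\}=n-\lceil n/m\rceil$ via the balanced shape. Your explicit bound $n\le m\lambda_1$ and your remark that every shape with at most $m$ rows occurs with positive multiplicity $\dim_{GL_m}\lambda$ fill in steps the paper leaves implicit.
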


\begin{proof}
For fixed $D$ the order-$d$ readings are the orbit sums of the $S_{n-d}^{(D)}$-action on level profiles, hence span the full space of $S_{n-d}^{(D)}$-invariant vectors, and every reading of order $d'<d$ is a sum of order-$d$ readings, so the joint span of orders $\le d$ is the $S_n$-module generated by those invariants. Lemma~\ref{lem:sv} identifies it as the isotypic component of $\{\lambda:\ |\lambda|-\lambda_1\le d,\ \ell(\lambda)\le m\}$. The cap: $\max\{|\lambda|-\lambda_1:\ \ell(\lambda)\le m\} =n-\lceil n/m\rceil$, attained by the most balanced shape. At $m=2$ this is the binary $\lfloor n/2\rfloor$; at $m=3$, $n=5$ the rungs are $117\to228\to243$, verified by explicit ranks.
\end{proof}

For the corollary, write $\mathcal N^{(d)}_{n,m}$ for the \emph{order-$d$ blind
space}, the orthogonal complement of the span of all order-$\le d$ readings; thus
$\mathcal N^{(1)}_{n,m}=\mathcal N_{n,m}$.

\begin{corollary}[audit evasion]\label{cor:manip}
A coalition of $c$ players restructuring only the dependence of the game on its own
levels has, against order-$\le d$ payment schemes, an invisible-restructuring space
isomorphic to $\mathcal N^{(d)}_{c,m}$; in particular, invisible restructurings exist
iff $c-\lceil c/m\rceil\ge d+1$. For $m\ge3$, trios evade every order-$1$ (value-based) scheme; at $m=2$ the thresholds $c\ge2d+2$ of the binary analysis \cite{Fried26} are recovered. The minimal evading coalition is non-increasing in $m$: it drops from $2d+2$ at $m=2$ and saturates at $d+2$ once $m\ge d+2$.
\end{corollary}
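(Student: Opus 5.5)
The plan is to reduce the $n$-player question to the $c$-player one by an explicit restriction identity, and then read off the threshold from Theorem~\ref{thm:ladder} applied with $n$ replaced by $c$. Let $C\subseteq\N$ be the coalition, $|C|=c$. I will formalize ``restructuring only the dependence on its own levels'' as a perturbation $h=g\otimes\delta_y$. Here $g\in(\C^m)^{\otimes C}$ is arbitrary, $y$ is a fixed profile of the outsiders, and $h(x)=g(x_C)$ if $x_{-C}=y$ and $h(x)=0$ otherwise. The map $\iota_y:g\mapsto g\otimes\delta_y$ is linear and injective. The claim to prove is that $\iota_y(g)\in\mathcal N^{(d)}_{n,m}$ iff $g\in\mathcal N^{(d)}_{c,m}$. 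Then $\iota_y$ restricts to an isomorphism from $\mathcal N^{(d)}_{c,m}$ onto the invisible-restructuring space.

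\textbf{Key step (restriction identity).} Take an order-$d'$ reading $F_{D,\boldsymbol\ell,\tau}$ with $d'\le d$, and split $D=D_1\sqcup D_2$ with $D_1=D\cap C$ and $D_2=D\setminus C$. On $\iota_y(g)$ every outside coordinate is frozen at $y$, so the reading vanishes unless $y_{D_2}=\boldsymbol\ell_{D_2}$. When it does not vanish, the constraint $\mathrm{type}(x_{-D})=\tau$ becomes a constraint on $\mathrm{type}(x_{C\setminus D_1})$ alone: we subtract the histogram of $y$ on $(\N\setminus C)\setminus D_2$ from $\tau$, or the reading vanishes if this is impossible. Hence
\[
F_{D,\boldsymbol\ell,\tau}\bigl(\iota_y(g)\bigr)\in\bigl\{0,\;F_{D_1,\boldsymbol\ell_{D_1},\tau'}(g)\bigr\},
\]
which is zero or a $c$-player reading of order $|D_1|\le d$. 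This gives one direction: if $g$ is order-$d$ blind, so is $\iota_y(g)$.

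For the converse, let $D_1\subseteq C$ with $|D_1|\le d$, any level vector on $D_1$, and any $c$-player type $\tau'$. Take $D=D_1$ and $\tau=\tau'+\mathrm{type}(y)$. The identity then returns exactly $F_{D_1,\boldsymbol\ell,\tau'}(g)$, so every order-$\le d$ reading of $g$ is realized. The same computation shows that for a general outside weighting $\varphi$ in place of $\delta_y$ the readings are $\varphi$-weighted combinations of these. So blindness of $g$ still suffices, which covers restructurings not tied to a single outside profile.

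\textbf{Threshold and consequences.} By Theorem~\ref{thm:ladder} with $n=c$, $\mathcal N^{(d)}_{c,m}\neq0$ iff $d<c-\lceil c/m\rceil$, that is, iff $c-\lceil c/m\rceil\ge d+1$. The specializations are arithmetic:
\begin{itemize}
\item For $c=3$, $d=1$ and $m\ge3$ we get $3-1=2\ge2$, which is the trio claim; indeed $\iota_y(\varepsilon_{012})$ is an explicit witness by Theorem~\ref{thm:three}.
\item For $m=2$ we have $c-\lceil c/2\rceil=\lfloor c/2\rfloor\ge d+1$ iff $c\ge2d+2$.
\item Monotonicity: $f_m(c)=c-\lceil c/m\rceil$ is nondecreasing in $m$, so the least $c$ with $f_m(c)\ge d+1$ is non-increasing in $m$.
\item Lower bound: $f_m(c)\le c-1$, so we always need $c\ge d+2$.
\item Saturation: $f_m(d+2)=d+1$ exactly when $\lceil (d+2)/m\rceil=1$, i.e.\ $m\ge d+2$.
\end{itemize}

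\textbf{Main obstacle.} The only delicate point is the restriction identity. One must check that the census $\tau$ over the complement of $D$ is correctly split between the frozen outsiders and the free coalition members. One must also make the modelling choice explicit: ``restructuring'' means perturbations supported on a fixed outside context, or weighted by one. Otherwise the isomorphism claim is ambiguous. Everything after that is Theorem~\ref{thm:ladder} plus integer arithmetic.
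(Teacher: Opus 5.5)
Your proof is correct for the restructurings you define, but you define them differently from the paper, and that difference is why your route is shorter. The paper takes $\delta_w(x)=w(x_C)$, i.e.\ your $g\otimes\varphi$ with $\varphi\equiv1$, so the perturbation ignores the outsiders entirely. An $n$-player reading of $\delta_w$ is then a completion-count-weighted sum of $c$-player readings of order $|D\cap C|$. That gives blind$\Rightarrow$blind, but the converse requires inverting a transfer matrix. Restricted to the columns $\tau=\tau'\uplus\{0^{n-c}\}$, this matrix is triangular (nonzero parts ordered by sub-multiset inclusion) with positive diagonal (all outsiders at level~$0$).

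Freezing the outsiders at one profile $y$ makes each reading either $0$ or a single $c$-player reading, and every $c$-player reading arises. So both directions of your restriction identity are immediate and no triangularity is needed.

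Your remark on general weightings $\varphi$ gives only the blind$\Rightarrow$blind half of the paper's version. The converse is not automatic for general $\varphi$: with $m\ge3$, $n-c=3$ and $\varphi=\varepsilon_{012}$ on the outsiders, $g\otimes\varphi$ is order-$1$ blind for every $g$, because the signs of $\varphi$ cancel within each type. Under the paper's reading of ``restructuring only the dependence on its own levels,'' your argument therefore gives the existence direction and the trio witness $\varepsilon_{012}\otimes\mathbf 1$. The ``only if'' direction, and with it the isomorphism and the minimality claims, still needs the paper's triangular step.

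Under your own formalization the argument is complete, including the arithmetic for the binary, monotonicity and saturation claims.
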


\noindent\emph{Proof idea (full proof in Appendix~\ref{app:manip}). Blindness
transfers down: an $n$-player reading of a coalition-local perturbation factors, by a
completion count, through $c$-player readings of the same order. Visibility transfers
back: the transfer matrix between the two reading families is triangular with positive
diagonal in a natural partial order on types, hence invertible. Both directions are
verified by rank computation in five configurations
(Appendix~\ref{app:verification}).}

\section{Applications}\label{sec:applications}

\emph{The theorems were motivated by three literatures; this section returns to each: an explicit pair of indistinguishable voting rules for the abstention literature, a quantified blind spot for graded attribution in machine learning, and feasible invisible restructuring for compensation with effort tiers.}

\textbf{Voting with abstention.} A $(3,2)$ simple game \cite{FZ,FM} is a multichoice
game with $m=3$ and binary output, and the power indices proposed for this class
\cite{Freixas05,Tchantcho,Freixas20} are anonymous in the sense of
Definition~\ref{def:anonymous}. Blindness in the ambient linear space does not by
itself produce indistinguishable \emph{monotone} rules, because the determinant game has negative entries, so we exhibit the collision explicitly inside the class.

\begin{proposition}[two rules no index can tell apart]\label{prop:voting}
On three voters with levels $\{0,1,2\}$ (no, abstain, yes), let $v$ be the monotone
$(3,2)$ rule whose winning configurations are those lying coordinatewise above one of
the three cyclic profiles $(0,1,2),(1,2,0),(2,0,1)$, and let $v'$ be its mirror, built
from $(0,2,1),(2,1,0),(1,0,2)$. Then $v\neq v'$, both are monotone rules with
$v(\mathbf0)=v'(\mathbf0)=0$ and $v(2,2,2)=v'(2,2,2)=1$, and
$v-v'=\varepsilon_{012}$. Consequently every anonymous power index assigns identical scores to all voters in $v$ and $v'$. Equivalently: in
$v$, the second and third voters play provably different roles (transposing them
changes the rule), yet no anonymous index can score them differently.
\end{proposition}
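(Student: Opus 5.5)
The plan is to verify the identity $v-v'=\varepsilon_{012}$ by a direct case check on the $27$ profiles, then read every remaining assertion off Theorem~\ref{thm:three} and equivariance. Monotonicity and the boundary values come for free from the construction. A rule defined as the up-closure of a set of profiles is monotone by definition. The profile $\mathbf 0$ dominates none of the six generators, so $v(\mathbf 0)=v'(\mathbf 0)=0$, while $(2,2,2)$ dominates all of them, so $v(2,2,2)=v'(2,2,2)=1$.

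The key step is to show that $v$ and $v'$ disagree only on the six permutation profiles of $(0,1,2)$.

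\emph{Cyclic side.} Take $x\ge(0,1,2)$ coordinatewise, so $x_3=2$ and $x_2\ge1$.
\begin{itemize}
\item If $x_1\ge1$, then $x\ge(1,0,2)$, so $v'(x)=1$.
\item If $x_1=0$ and $x_2=2$, then $x\ge(0,2,1)$, so $v'(x)=1$.
\item The only remaining case is $x=(0,1,2)$ itself. It dominates none of $(0,2,1),(2,1,0),(1,0,2)$, since each of these has an entry $2$ where $x$ does not.
\end{itemize}
So the only profile won under $v$ via $(0,1,2)$ but lost under $v'$ is $(0,1,2)$. The cyclic rotation of voters preserves both generator sets, so the same argument covers $(1,2,0)$ and $(2,0,1)$.

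\emph{Mirror side.} The transposition of voters $2$ and $3$ exchanges the two generator sets. The symmetric argument therefore shows that the only profiles in $v'\setminus v$ are the three anticyclic profiles.

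Hence $v-v'$ equals $+1$ on the cyclic profiles, $-1$ on the anticyclic ones, and $0$ elsewhere, which is $\varepsilon_{012}$. In particular $v\neq v'$, since they differ at $(0,1,2)$.

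For the consequence, Theorem~\ref{thm:three} (equivalently the explicit cancellation in Example~\ref{ex:chirality}) gives $\varepsilon_{012}\in\mathcal N_{3,3}$. Every aggregated reading $F_{i,\ell,\tau}$ therefore takes equal values on $v$ and $v'$. By Definition~\ref{def:anonymous} and Lemma~\ref{lem:anonymous}, every anonymous index $\psi$ is built from these readings, so $\psi(v)=\psi(v')$. For indices presented with a normalization (for instance, dividing by a total swing count), the normalizing constants are themselves sums of readings, so the normalized scores also coincide.

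For the final sentence, let $\sigma=(2\,3)$. The computation above shows $\sigma v=v'\neq v$, so voters $2$ and $3$ play different roles in $v$. On the other hand, equivariance gives $\psi(v')=\psi(\sigma v)=\sigma\psi(v)$. Combined with $\psi(v')=\psi(v)$, this yields $\psi_2(v)=\psi_3(v)$.

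The main obstacle is the case check that no non-permutation profile slips into exactly one of the two up-sets. The three-case split above, transported by the $\mathbb Z/3$ rotation and by the mirror transposition, should keep it to a few lines. I would also confirm it against the explicit $27$-profile table of Appendix~\ref{app:verification}.
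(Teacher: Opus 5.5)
Your proof is correct in outline and follows the paper's route. You show the two up-closures agree off the six permutations of $(0,1,2)$, conclude $v-v'=\varepsilon_{012}$, invoke blindness via Theorem~\ref{thm:three}, and then use $v'=(23)v$ with equivariance to get $\psi_2(v)=\psi_3(v)$.

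One justification is false, though. You say $(0,1,2)$ dominates none of $(0,2,1),(2,1,0),(1,0,2)$ ``since each of these has an entry $2$ where $x$ does not''. But $(1,0,2)$ has its $2$ in the third coordinate, exactly where $(0,1,2)$ does. Non-domination there comes from the first coordinate ($0<1$). The conclusion is true, so this is a one-line repair. The cleanest fix is the paper's: all six generators have coordinate sum $3$, and a profile dominating another of equal coordinate sum must equal it. Hence neither rule wins at any of the other's generators, and this handles the rotated generators without any transport.

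Your only real departure from the paper is how you verify that no non-permutation profile lies in exactly one up-set. You do a three-case split above $(0,1,2)$, then transport it by the $\mathbb Z/3$ rotation (which fixes both rules) and by $(23)$ (which exchanges them). The paper instead gives a uniform parity-swap argument. If $x\ge g$ with $x\neq g$, then $x$ strictly exceeds $g$ at a coordinate carrying level $0$ or $1$. Transposing $g$'s entries at that coordinate and at the coordinate carrying the next level up gives a generator of opposite parity that is still dominated by $x$. Your symmetry reduction has the merit of establishing $v'=(23)v$ up front and reusing it for the final claim. The paper's swap argument covers all six generators at once with no case analysis.
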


\noindent\emph{Proof idea (full proof in Appendix~\ref{app:manip}). An equal-sum
argument shows neither rule wins at the other's generators; a parity-swap argument
shows the two up-closures coincide off the six permutations, so $v-v'$ is exactly
$\varepsilon_{012}$; and $v'=(23)v$ turns blindness into the false-symmetry
statement. All claims are also verified exhaustively
(Appendix~\ref{app:verification}).}

The indices do not merely miss a nuance: they report a symmetry between voters that
the rule does not have. This is a ceiling on the published index family, not an
axiomatic disagreement within it.

\textbf{Graded ablations in machine learning.} Attribution methods increasingly
intervene at graded strengths (a component off, attenuated, or on): these are $m=3$
games on components, and multichoice solutions have been proposed for exactly this
purpose \cite{FLSN}, as have importance indices for $k$-ary games in multicriteria
decision analysis \cite{GLR}. All are anonymous values, so the ceiling applies:
graded-ablation attribution has a strictly larger blind
spot than binary-ablation attribution at the same probe order, with the deficit given
by Theorem~\ref{thm:dims}; and the recovery schedule (Theorem~\ref{thm:ladder})
prescribes the interaction order required to close it. The binary empirics are developed in \cite{Fried26}; the multichoice analogues remain open.

\textbf{Compensation with effort tiers.} In surplus division where agents choose
effort levels (gig platforms, data contributed at quality tiers, treaty compliance
bands), Corollary~\ref{cor:manip} states that three colluding agents can re-route
which level-combinations create value, preserving every anonymous payment; this evasion is strictly easier than in the binary world, available at every scale. The double-swap games of Theorem~\ref{thm:span} list the available invisible restructurings explicitly. Corollary~\ref{cor:manip} lives in the ambient linear space; the
following lemma brings it inside the economically constrained class.

\begin{lemma}[feasibility of invisible restructuring]\label{lem:interior}
If the base game is strictly monotone, so that every raise of a single player's level strictly increases worth, then for every blind $\delta$ and all sufficiently small $t>0$ the restructured game $v+t\delta$ is still monotone. Hence in monotone surplus models the thresholds of Corollary~\ref{cor:manip} are attained by feasible games.
\end{lemma}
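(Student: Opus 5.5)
The plan is a finite-dimensional compactness argument on the finite profile set $\{0,\dots,m-1\}^n$, with blindness itself playing no role beyond $\delta$ being a fixed vector. Monotonicity of a multichoice game means $v(x)\le v(x+e_j)$ whenever $x_j<m-1$, where $e_j$ raises player $j$'s level by one. By transitivity along coordinatewise chains, it suffices to check these single-step raises. There are finitely many such covering pairs $(x,x+e_j)$, at most $n\,m^{n-1}(m-1)$ of them.

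First, strict monotonicity of the base game gives a uniform margin
\[
\eta\;=\;\min_{(x,\,x+e_j)}\bigl(v(x+e_j)-v(x)\bigr)\;>\;0,
\]
because this is a minimum of finitely many strictly positive numbers. Second, the perturbation changes each single-step increment by at most $2\|\delta\|_\infty$. Explicitly,
\[
(v+t\delta)(x+e_j)-(v+t\delta)(x)\;\ge\;\eta-2t\|\delta\|_\infty .
\]
This lower bound is positive for all $0<t<\eta/(2\|\delta\|_\infty)$, and for every $t>0$ when $\delta=0$. Third, I would check the anchor. Since $v(\mathbf 0)=0$ must be preserved, I need $\delta(\mathbf 0)=0$. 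Theorem~\ref{thm:dims} shows the all-inactive profile is visible, because $\delta_{\mathbf 0}$ is $S_n$-invariant and so lies in the $(n)$-component. Hence any blind $\delta$ is orthogonal to $\delta_{\mathbf 0}$, which gives $\delta(\mathbf 0)=0$ automatically. So $v+t\delta$ is a legitimate monotone game. Since $\delta$ is blind, it is also indistinguishable from $v$ by every anonymous value, and more generally by every order-$\le d$ scheme whenever $\delta\in\mathcal N^{(d)}_{n,m}$.

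For the ``hence'' clause, I would combine this with Corollary~\ref{cor:manip}. When $c-\lceil c/m\rceil\ge d+1$, the corollary supplies a nonzero coalition-local $\delta$ in the order-$d$ blind space, and the same margin argument applies to it verbatim. I also need strictly monotone base games to exist at all, so that the statement is not vacuous. The game $v(x)=\sum_i x_i$ works: it is strictly monotone, vanishes at $\mathbf 0$, and can be used as the base.

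The only point needing care, and the main obstacle, is the anchor and boundary conventions rather than the estimate itself. One must confirm that the monotonicity notion used in the applications is the coordinatewise single-step one, and that blind directions never move $v(\mathbf 0)$; the visibility of $\delta_{\mathbf 0}$ settles the latter. If the application also requires a normalization such as $v(m-1,\dots,m-1)=1$, as in the voting setting, the same orthogonality argument applies: the all-top profile is likewise $S_n$-invariant and hence visible, so $\delta$ vanishes there too.
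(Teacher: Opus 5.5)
Your proposal is correct and is the paper's argument: the paper's one-line proof says strict monotonicity is finitely many strict inequalities along cover relations, each surviving a small perturbation. Your explicit margin $t<\eta/(2\|\delta\|_\infty)$, the check that $\delta(\mathbf 0)=0$ because the $S_n$-invariant all-inactive profile is visible, and the base game $v(x)=\sum_i x_i$ making the statement non-vacuous are sound details the paper leaves implicit.
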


\begin{proof}
Strict monotonicity is finitely many strict inequalities along cover relations of the
level order; each survives a sufficiently small perturbation.
\end{proof}

\noindent For discrete classes, feasibility is not automatic and must be exhibited;
Proposition~\ref{prop:voting} does exactly this for $(3,2)$ rules.

\section{The query-complexity reading}\label{sec:algo}

\emph{The dimension gap of Theorem~\ref{thm:dims} has a computational reading, made exact here: everything an anonymous auditor can see is cheap to estimate, while knowing the game itself is not.}

Fix $m$ and normalize $\|v\|_\infty\le1$. The number of order-$1$ readings is
$N=n\,m\binom{n+m-2}{m-1}=O(n^m)$, polynomial in $n$. Write each reading in
normalized form as the average of $v$ over its orbit; every standard index is a
combination of normalized readings with bounded coefficient mass (probabilistic
values are averages of marginal contributions \cite{DNW,Weber88}), so after rescaling
the proposition below covers every one of them.

\begin{proposition}[auditing is cheap; knowing the game is not]\label{prop:algo}
(i) With query access to $v$, all $N$ normalized readings, and hence simultaneously every anonymous value with coefficient mass at most $1$, can be estimated to additive error $\varepsilon$ with probability $1-\delta$ using $O\bigl(\varepsilon^{-2}N\log(N/\delta)\bigr)$ queries in total. (ii) Call an anonymous value $\psi$ \emph{full-support} if $\psi(\delta_x)\neq0$ for every profile $x\neq\mathbf0$. Any deterministic algorithm that \emph{exactly} computes a full-support anonymous value must query all $m^n-1$ nonzero profiles in the worst case.
\end{proposition}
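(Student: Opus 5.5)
For part (i), I will estimate each normalized reading separately by Monte Carlo sampling and then take a union bound. A normalized reading is the uniform average of $v$ over one $S_{n-1}^{(i)}$-orbit, namely the profiles $x$ with $x_i=\ell$ and $\mathrm{type}(x_{-i})=\tau$. Sampling a uniform element of this orbit is easy: place the prescribed multiset of levels on the other $n-1$ players in a uniformly random order. Each query returns a value in $[-1,1]$, so Hoeffding's inequality says that $O(\varepsilon^{-2}\log(N/\delta))$ samples bring one reading to error $\varepsilon$ with failure probability at most $\delta/N$. A union bound over the $N$ readings then gives all of them simultaneously with probability $1-\delta$, for $O(\varepsilon^{-2}N\log(N/\delta))$ queries in total.

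To pass from readings to values, I use the structure from Lemma~\ref{lem:anonymous}. An anonymous value with coefficient mass at most $1$ has each coordinate $\psi_i=\sum_{\ell,\tau}c_{\ell,\tau}\bar F_{i,\ell,\tau}$ with $\sum|c_{\ell,\tau}|\le1$. By the triangle inequality its error is at most $\max|\hat F-\bar F|\le\varepsilon$. This holds on the same good event for every such $\psi$ at once, because the estimates do not depend on $\psi$. I will also note the count $N=n\cdot m\cdot\binom{(n-1)+m-1}{m-1}$: there are $n$ players, $m$ choices of $\ell$, and the number of types is the number of multisets of size $n-1$ from $m$ levels.

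For part (ii), I will use a standard adversary argument. Suppose a deterministic algorithm stops after querying a set $Q$ that misses some nonzero profile $x$. Fix the adversary to answer $0$ on every query, so the transcript is that of the zero game. Both $v=0$ and $v=\delta_x$ agree on $Q$ and satisfy the anchor $v(\mathbf0)=0$, so the algorithm sees identical transcripts and produces identical outputs on the two games. But $\psi(0)=0$ while $\psi(\delta_x)\neq0$ by full support, so the output is wrong on one of them. Hence every nonzero profile must be queried. Querying $\mathbf 0$ is useless because $v(\mathbf 0)=0$ is known in advance, which gives the count $m^n-1$. Since the algorithm is deterministic, the adversary's answers fix its query sequence, so the argument handles adaptivity.

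I expect no substantive obstacle. The only point needing care is the normalization in (i): "coefficient mass" has to be measured against normalized (averaged) readings rather than raw sums, so that errors do not scale with orbit sizes. It is also worth remarking that full support is a mild condition, since, for example, a generic positive-coefficient value satisfies it, but this remark is not needed for the proof.
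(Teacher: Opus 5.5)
Your proposal is correct and follows the paper's proof: orbit-wise uniform sampling with Hoeffding and a union bound over the $N$ orbits for (i), and an adversary argument for (ii), where your pair (zero game, $\delta_x$) is the special case $v=0$, $t=1$ of the paper's perturbation $v\pm t\,\delta_{x_0}$ of an arbitrary game. Your explicit triangle-inequality step for coefficient mass, and your remark on how determinism handles adaptivity, fill in details that the paper leaves implicit.
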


\begin{proof}
(i) Sample uniformly within each orbit; Hoeffding's inequality and a union bound over the $N$ orbits. (ii) An adversary argument: if a nonzero profile $x_0$ is unqueried, replacing $v$ by $v\pm t\,\delta_{x_0}$ changes some output coordinate by $t\,\psi(\delta_{x_0})\neq0$ while remaining consistent with every query and with the anchor $v(\mathbf0)=0$. Banzhaf-type extensions \cite{Freixas20,RGL20} are full-support whenever their defining weights are strictly positive on all profiles, as can be read off the defining formulas.
\end{proof}

Part (i) extends to graded participation the sampling tradition for binary power indices \cite{BMRPRS}. Two structural companions sharpen the pair. First, the visible projection is a linear sketch of \emph{minimal} rank: any linear map $S:(\C^m)^{\otimes n}\to\C^k$ from which every anonymous value can be computed must satisfy $\ker S\subseteq\mathcal N_{n,m}$ (each reading is itself a coordinate of an anonymous value, so each must factor through $S$), whence $k\ge\dim\mathcal V_{n,m}$; the readings realize this sketch in overcomplete form with $N=O(n^m)$ entries. Second, the blind space converts the dimension gap into an oracle separation: for any nonzero blind $\varepsilon$, a single query to a profile in the support of $\varepsilon$ distinguishes $v$ from $v+\varepsilon$, while every anonymous instrument, evaluated exactly and in unlimited number, does not. Query access separates with one query what anonymous measurement cannot separate at all.

Together these quantify the audit asymmetry behind Sections~\ref{sec:three} and~\ref{sec:ladder}: the visible projection of a game is polynomial-size, cheaply estimable, and of minimal sketch rank, while the blind space, exponentially large by Theorem~\ref{thm:dims}, is exactly why no amount of anonymous auditing recovers the game itself.

\section{Discussion and open problems}\label{sec:discussion}

The single mechanism (anonymity is $S_{n-1}$-invariance; branching confines invariants to $(n)$ and $(n-1,1)$) does three things at once: it derives the known
binary theory \cite{KW85a,ADG,HLJS} as a special case, it extends it to the entire
multichoice landscape in one stroke, and it explains \emph{why} the benchmark binary constants ($n\ge4$ \cite{ADG}; the $\lfloor n/2\rfloor$ cap \cite{Fried26}) are what they are: they are two-row facts, and the rows count participation levels. Open problems: the simple-game
($(j,k)$) refinement of the census questions studied in the binary world; the information content of the \emph{player-weighted} multichoice families (Hsiao--Raghavan with player-dependent weight systems), where the symmetric framework no longer applies and where, by analogy with the binary weighted Shapley family, full information is plausible; the
continuum of levels; and the computational side, namely the complexity, for succinctly presented games (weighted $(j,k)$ rules, circuits), of deciding visibility or evaluating the blind projection, where the binary anchors \cite{PK90,DP94} suggest
hardness.

\appendix

\section{Deferred proofs}\label{app:manip}

\begin{proof}[Proof of Proposition~\ref{prop:voting}]
All six generating profiles have coordinate sum $3$, and domination between profiles
of equal sum forces equality, so neither rule wins at the other's generators. Off the
six permutations of $(0,1,2)$ the two up-closures coincide: if $x$ is not such a
permutation and $x\ge\sigma\cdot(0,1,2)$, then $x$ strictly exceeds the generator at
some coordinate carrying level $0$ or $1$ (level $2$ cannot be exceeded); swapping the
generator's entries at that coordinate and at the coordinate carrying the next level
up yields a generator of opposite parity still dominated by $x$. Hence $v-v'$ is $+1$
on the cyclic profiles, $-1$ on the mirrors, and $0$ elsewhere: exactly
$\varepsilon_{012}$, which is blind (Theorem~\ref{thm:three}), so every anonymous
index agrees on $v$ and $v'$. Both rules are monotone as up-closures. Finally
$v'=(23)\,v$, and for any anonymous (equivariant) index $\varphi$, blindness gives
$\varphi(v)=\varphi(v')=(23)\,\varphi(v)$: the second and third voters receive equal
scores in $v$, although $v\neq(23)\,v$.
\end{proof}

\begin{proof}[Proof of Corollary~\ref{cor:manip}]
Fix $C\subseteq\N$ with $|C|=c$ and consider restructurings $\delta_w(x)=w(x_C)$ for
$w:\{0,\dots,m-1\}^C\to\R$: the coalition rewires how value depends on its own
participation profile. The assignment $w\mapsto\delta_w$ is linear and injective; we
show it identifies $\mathcal N^{(d)}_{c,m}$ with the invisible restructurings.

\emph{Blind $w$ gives blind $\delta_w$.} Let $F_{D,\boldsymbol\ell,\tau}$ be any
reading with $|D|\le d$. Then
\[
F_{D,\boldsymbol\ell,\tau}(\delta_w)\;=\;\sum_{y\in\{0,\dots,m-1\}^C}N(y)\,w(y),
\]
where $N(y)$ counts profiles $x$ with $x_C=y$, $x_D=\boldsymbol\ell$, and
$\mathrm{type}(x_{-D})=\tau$. The count vanishes unless
$y_{D\cap C}=\boldsymbol\ell_{D\cap C}$, and otherwise depends on $y$ only through
$\mathrm{type}(y_{C\setminus D})$: the counted profiles differ only on the players
outside $C\cup D$, and the number of admissible completions is determined by the
multiset of levels that $\tau$ still requires after the (fixed) contribution of
$D\setminus C$ and the (type-determined) contribution of $C\setminus D$. Hence
$F_{D,\boldsymbol\ell,\tau}(\delta_w)$ is a linear combination of readings of order
$|D\cap C|\le d$ on the $c$-player game $w$, and vanishes whenever
$w\in\mathcal N^{(d)}_{c,m}$.

\emph{Visible $w$ gives visible $\delta_w$.} Conversely, take $D\subseteq C$ with
$|D|\le d$ and vary $\tau$. Writing
$F_{D,\boldsymbol\ell,\tau}(\delta_w)=\sum_{\tau'}c_{\tau'\tau}\,
F^{C}_{D,\boldsymbol\ell,\tau'}(w)$, where $\tau'$ runs over types of $C\setminus D$
and $F^{C}$ denotes readings of the $c$-player problem, the coefficient
$c_{\tau'\tau}$ counts the level assignments of the $n-c$ players outside $C$ that
complete $\tau'$ to $\tau$; it is a positive multinomial coefficient when
$\tau'\subseteq\tau$ (as multisets, with $|\tau\setminus\tau'|=n-c$) and zero
otherwise. Restrict attention to the columns $\tau=\tau'\uplus\{0^{\,n-c}\}$: the
resulting square matrix has positive diagonal (all outside players at level~$0$), and
its $(\tau'',\tau')$ entry vanishes unless the nonzero part of $\tau''$ is a
sub-multiset of the nonzero part of $\tau'$; this is a partial order, and the matrix is triangular with respect to it. It is therefore invertible, so every $c$-player reading
$F^{C}_{D,\boldsymbol\ell,\tau'}(w)$ is recoverable as a linear combination of
order-$\le d$ readings of $\delta_w$; if some order-$\le d$ reading detects $w$, then
$\delta_w$ is visible at order $\le d$.

Thus the invisible-restructuring space is
$\{\delta_w:w\in\mathcal N^{(d)}_{c,m}\}\cong\mathcal N^{(d)}_{c,m}$, and by
Theorem~\ref{thm:ladder} it is nonzero iff some $\lambda\vdash c$ with at most $m$
rows has more than $d$ cells outside its first row, iff $c-\lceil c/m\rceil\ge d+1$.
\end{proof}

\section{Verification index}\label{app:verification}
All claims verified computationally (scripts available from the author):
(1) Schur--Weyl bookkeeping $\sum_\lambda f^\lambda\dim_{GL_m}\lambda=m^n$ for
$m=2,3,4$, $n\le7$; (2) blind-dimension sequences for $m=2,3,4$, $n\le8$, with the
$m=2$ row matching $2^n-n^2+n-2$ identically; (3) explicit order-$1$ measurement
matrices at $m=3$: ranks $26$ of $27$ ($n=3$) and $60$ of $81$ ($n=4$), matching
Theorem~\ref{thm:dims}; (4) the determinant game annihilated exactly by all readings;
(5) order-$1$ and order-$2$ ranks $117$ and $228$ of $243$ at $n=5$, $m=3$, matching
Theorem~\ref{thm:ladder}, with the cap at $d=3$; (6) at $m=2$, $n=6$: explicit
$S_6$-isotypic projectors (traces $7/25/27/5$) annihilated by the value functionals
exactly for $j\ge2$; (7) spanning family: the double-swap games are annihilated
exactly by all order-$1$ readings, and their span has rank equal to the blind
dimension, in eleven configurations ($n\le6$ for $m\le3$; $n\le5$ for $m=4$; $n=3$
with determinant games for $m\le5$); (8) evasion embedding: the rank of the $n$-player order-$1$ readings restricted to coalition-local perturbations equals the rank of the $c$-player readings in five configurations ($n\le6$, $c\in\{3,4\}$,
$m\in\{2,3\}$); (9) the voting pair of Proposition~\ref{prop:voting}: both rules
verified monotone (13 winning configurations each), differing exactly by
$\varepsilon_{012}$, with identical order-$1$ readings profile-by-profile.

\paragraph{AI Disclosure.} We used a large language model (Claude, Anthropic) to assist with prose editing and restructuring, adversarial review of drafts, and the writing and execution of the verification scripts indexed in Appendix~\ref{app:verification}. The tool materially affected the exposition throughout, including the introduction and Sections~\ref{sec:main} and~\ref{sec:algo}. All definitions, theorems, and proofs were formulated, checked, and are vouched for by the author(s), who verified the correctness and originality of all content, including references, and take full responsibility for it.


\begin{thebibliography}{99}\small
\bibitem{ADG} R.~Amer, J.~Derks, J.~M. Gim\'enez, On cooperative games, inseparable by
semivalues, \emph{Int.\ J.\ Game Theory} \textbf{32} (2003), 181--188.
\bibitem{BMRPRS} Y.~Bachrach, E.~Markakis, E.~Resnick, A.~D. Procaccia,
J.~S. Rosenschein, A.~Saberi, Approximating power indices: theoretical and empirical
analysis, \emph{Auton.\ Agents Multi-Agent Syst.}\ \textbf{20} (2010), 105--122.
\bibitem{BGGJ26} M.~Basallote, H.~A. Galindo, I.~M. Gallego, A.~Jim\'enez-Losada, The 2-Shapley value for bicooperative games, \emph{Ann.\ Oper.\ Res.}\ \textbf{361} (2026), 503--527.
\bibitem{Bilbao00} J.~M. Bilbao, \emph{Cooperative Games on Combinatorial Structures}, Kluwer, 2000.
\bibitem{CS00} E.~Calvo, J.~C. Santos, A value for multichoice games, \emph{Math.\
Social Sci.}\ \textbf{40} (2000), 341--354.
\bibitem{CO} K.-D. Crisman, M.~E. Orrison, Representation theory of the symmetric
group in voting theory and game theory, arXiv:1508.05891 (2015).
\bibitem{DP94} X.~Deng, C.~H. Papadimitriou, On the complexity of cooperative solution
concepts, \emph{Math.\ Oper.\ Res.}\ \textbf{19} (1994), 257--266.
\bibitem{DGP20} M.~Dom\`enech, J.~M. Gim\'enez, M.~A. Puente, Some properties for bisemivalues on bicooperative games, \emph{J.\ Optim.\ Theory Appl.}\ \textbf{185} (2020), 270--288.
\bibitem{DNW} P.~Dubey, A.~Neyman, R.~J. Weber, Value theory without efficiency,
\emph{Math.\ Oper.\ Res.}\ \textbf{6} (1981), 122--128.
\bibitem{FM} D.~S. Felsenthal, M.~Machover, Ternary voting games, \emph{Int.\ J.\ Game
Theory} \textbf{26} (1997), 335--351.
\bibitem{Freixas05} J.~Freixas, Banzhaf measures for games with several levels of
approval in the input and output, \emph{Ann.\ Oper.\ Res.}\ \textbf{137} (2005),
45--66.
\bibitem{Freixas20} J.~Freixas, The Banzhaf value for cooperative and simple
multichoice games, \emph{Group Decis.\ Negot.}\ \textbf{29} (2020),
doi:10.1007/s10726-019-09651-4.
\bibitem{FLSN} D.~Fryer, D.~Lowing, I.~Str\"umke, H.~Nguyen, Multi-choice solutions
for feature and parameter importance, in \emph{Data Science and Machine Learning
(AusDM 2024)}, CCIS \textbf{2325}, Springer, Singapore, 2026,
doi:10.1007/978-981-95-6888-8\_4.
\bibitem{Fried26} M.~Fried, What semivalues cannot see: the information content of
anonymous marginal values, arXiv:2607.07013 (2026).
\bibitem{FZ} J.~Freixas, W.~S. Zwicker, Weighted voting, abstention, and multiple
levels of approval, \emph{Soc.\ Choice Welf.}\ \textbf{21} (2003), 399--431.
\bibitem{FH} W.~Fulton, J.~Harris, \emph{Representation Theory: A First Course},
GTM 129, Springer, 1991.
\bibitem{Grabisch16} M.~Grabisch, \emph{Set Functions, Games and Capacities in
Decision Making}, Springer, 2016.
\bibitem{GL07} M.~Grabisch, F.~Lange, Games on lattices, multichoice games and the
Shapley value: a new approach, \emph{Math.\ Methods Oper.\ Res.}\ \textbf{65} (2007),
153--167.
\bibitem{GLR} M.~Grabisch, C.~Labreuche, M.~Ridaoui, On importance indices in
multicriteria decision making, \emph{European J.\ Oper.\ Res.}\ \textbf{277} (2019),
269--283.
\bibitem{GrabischRoubens} M.~Grabisch, M.~Roubens, An axiomatic approach to the
concept of interaction among players in cooperative games, \emph{Int.\ J.\ Game
Theory} \textbf{28} (1999), 547--565.
\bibitem{HLJS} L.~Hern\'andez-Lamoneda, R.~Ju\'arez, F.~S\'anchez-S\'anchez,
Dissection of solutions in cooperative game theory using representation techniques,
\emph{Int.\ J.\ Game Theory} \textbf{35} (2007), 395--426.
\bibitem{HR} C.-R. Hsiao, T.~E.~S. Raghavan, Shapley value for multichoice cooperative
games, I, \emph{Games Econom.\ Behav.}\ \textbf{5} (1993), 240--256.
\bibitem{KW85a} N.~L. Kleinberg, J.~H. Weiss, Algebraic structure of games,
\emph{Math.\ Social Sci.}\ \textbf{9} (1985), 35--44.
\bibitem{KW85b} N.~L. Kleinberg, J.~H. Weiss, Equivalent $n$-person games and the null
space of the Shapley value, \emph{Math.\ Oper.\ Res.}\ \textbf{10} (1985), 233--243.
\bibitem{KW86} N.~L. Kleinberg, J.~H. Weiss, The orthogonal decomposition of games and
an averaging formula for the Shapley value, \emph{Math.\ Oper.\ Res.}\ \textbf{11}
(1986), 117--124.
\bibitem{KSZ} F.~Klijn, M.~Slikker, J.~Zarzuelo, Characterizations of a multi-choice
value, \emph{Int.\ J.\ Game Theory} \textbf{28} (1999), 521--532.
\bibitem{Lowing} D.~Lowing, K.~Techer, Marginalism, egalitarianism and efficiency in
multi-choice games, \emph{Soc.\ Choice Welf.}\ \textbf{59} (2022), 815--861.
\bibitem{LT25} D.~Lowing, K.~Techer, Toward a consensus on extended Shapley values for
multi-choice games, \emph{Math.\ Social Sci.}\ \textbf{137} (2025), 102407.
\bibitem{LY25} D.~Lowing, M.~Yokoo, Sharing values for multi-choice games: an
axiomatic approach, \emph{Int.\ J.\ Game Theory} \textbf{54} (2025), article~30.
\bibitem{LL17} S.~M. Lundberg, S.-I. Lee, A unified approach to interpreting model
predictions, in \emph{Advances in Neural Information Processing Systems 30}, 2017,
4765--4774.
\bibitem{PK90} K.~Prasad, J.~S. Kelly, NP-completeness of some problems concerning
voting games, \emph{Int.\ J.\ Game Theory} \textbf{19} (1990), 1--9.
\bibitem{PZ05} H.~Peters, H.~Zank, The egalitarian solution for multichoice games,
\emph{Ann.\ Oper.\ Res.}\ \textbf{137} (2005), 399--409.
\bibitem{RGL20} M.~Ridaoui, M.~Grabisch, C.~Labreuche, Interaction indices for
multichoice games, \emph{Fuzzy Sets and Systems} \textbf{383} (2020), 1--26.
\bibitem{SK22} M.~Saito, Y.~Kusunoki, Characterization of solutions for bicooperative
games by using representation theory, \emph{J.\ Oper.\ Res.\ Soc.\ Japan}
\textbf{65} (2022), 76--104.
\bibitem{Sagan} B.~E. Sagan, \emph{The Symmetric Group}, 2nd ed., GTM 203, Springer,
2001.
\bibitem{Stanley} R.~P. Stanley, \emph{Enumerative Combinatorics}, vol.~2,
Cambridge University Press, 1999.
\bibitem{Tchantcho} B.~Tchantcho, L.~Diffo Lambo, R.~Pongou, B.~Mbama Engoulou,
Voters' power in voting games with abstention, \emph{Games Econom.\ Behav.}\
\textbf{64} (2008), 335--350.
\bibitem{vdN} A.~van den Nouweland, J.~Potters, S.~Tijs, J.~M. Zarzuelo, Cores and
related solution concepts for multi-choice games, \emph{ZOR -- Math.\ Methods Oper.\ Res.}\ \textbf{41} (1995), 289--311.
\bibitem{Weber88} R.~J. Weber, Probabilistic values for games, in A.~E. Roth (ed.),
\emph{The Shapley Value}, Cambridge University Press, 1988, 101--119.
\end{thebibliography}
\end{document}